\documentclass[letterpaper,twocolumn,10pt]{article}
\usepackage{usenix}
\usepackage{amsmath,amsthm,amsfonts,amssymb,amscd}
\usepackage{thmtools}
\usepackage{lastpage}
\usepackage{enumerate}
\usepackage{mathrsfs}
\usepackage[x11names]{xcolor}
\usepackage{graphicx}
\usepackage{listings}
\usepackage{hyperref}
\usepackage{url}
\usepackage{bm}
\usepackage{xspace}
\usepackage{framed}
\usepackage[camera]{dtrt}
\usepackage[capitalize]{cleveref}
\usepackage{enumitem}
\usepackage{multicol}
\usepackage{tablefootnote}
\usepackage{subcaption}

\usepackage{mdframed}
\usepackage{booktabs}
\usepackage{amsmath,amssymb,mathtools}
\usepackage{tikz}
\usetikzlibrary{arrows.meta}
\usepackage[most]{tcolorbox}
\usepackage{algorithm}    %
\usepackage{enumitem}
\usepackage{algorithmicx}
\usepackage{algpseudocode}%
\usepackage{siunitx}
\usepackage{acro}
\usepackage{multirow}
\newtcolorbox{gamebox}[1]{
    enhanced,
    breakable,
    colback=white,
    colframe=black,
    coltitle=white,
    colbacktitle=black,
    title={#1},
    fonttitle=\bfseries,
    boxrule=0.65pt,
    arc=0pt,
    left=5pt,
    right=5pt,
    top=5pt,
    bottom=5pt,
    before skip=6pt,
    after skip=6pt,
    attach boxed title to top left={xshift=0pt,yshift=0pt},
    boxed title style={
        colback=black,
        colframe=black,
        arc=0pt,
        boxrule=0pt,
        left=5pt,
        right=5pt,
        top=3pt,
        bottom=3pt
    }
}

\newcommand{\F}{\mathbb{F}}

\renewcommand{\S}{\mathbb{S}}
\makeatletter
\def\defcal#1{\expandafter\def\csname cal#1\endcsname{\mathcal{#1}}}
\@for\c:=A,B,C,D,E,F,G,H,I,J,K,L,M,N,O,P,Q,R,S,T,U,V,W,X,Y,Z\do{\expandafter\defcal\expandafter{\c}}

\def\defbf#1{%
  \@ifundefined{b#1}{%
    \expandafter\def\csname b#1\endcsname{\mathbf{#1}}%
  }{}%
}
\@for\c:=A,B,C,D,E,F,G,H,I,J,K,L,M,N,O,P,Q,R,S,T,U,V,W,X,Y,Z,a,b,c,d,e,f,g,h,i,j,k,l,m,n,o,p,q,r,s,t,u,v,w,x,y,z\do{\expandafter\defbf\expandafter{\c}}
\makeatother

\newcommand{\wt}{\operatorname{wt}}
\newcommand{\supp}{\operatorname{supp}}
\newcommand{\bias}{\operatorname{bias}}
\newcommand{\pfail}{p_{\mathrm{fail}}}
\DeclareAcronym{mpc}{
    short = MPC,
    long = multi-party computation
}
\DeclareAcronym{fhe}{
    short = FHE,
    long = fully homomorphic encryption
}
\DeclareAcronym{vMVMD}{
    short = vMVMD,
    long = verifiable matrix--vector multiplication delegation
}
\DeclareAcronym{pvMVMD}{
    short = pvMVMD,
    long = private and verifiable matrix--vector multiplication delegation
}
\DeclareAcronym{tee}{
    short = TEE,
    long = trusted execution environment
}
\DeclareAcronym{llm}{
    short = LLM,
    long = large language model
}
\DeclareAcronym{mvm}{
    short = MVM,
    long = matrix--vector multiplication
}
\newcommand{\bPi}{\boldsymbol{\Pi}}
\newcommand{\PivMVMD}{\Pi_{\mathsf{vMVMD}}}
\newcommand{\PipvMVMD}{\Pi_{\mathsf{pvMVMD}}}

\newcommand{\bDelta}{\boldsymbol{\Delta}}
\newcommand{\pp}{\mathsf{pp}}
\newcommand{\adv}{\mathcal{A}}
\newcommand{\st}{\mathsf{st}}
\newcommand{\td}{\mathsf{td}}
\newcommand{\GRAA}{\mathbf{G}_{\text{RAA}}}

\newcommand{\negl}{\mathsf{negl}}

\newcommand{\privN}{N_{\text{x}}}
\newcommand{\verN}{N_{\text{y}}}
\newcommand{\privR}{R_{\text{x}}}
\newcommand{\verR}{R_{\text{y}}}
\newcommand{\privG}{\mathbf{G}_{\text{x}}}
\newcommand{\verG}{\mathbf{G}_{\text{y}}}
\newcommand{\privcalG}{\mathcal{G}_{\text{x}}}
\newcommand{\vercalG}{\mathcal{G}_{\text{y}}}
\newcommand{\privt}{t_{\text{x}}}
\newcommand{\verit}{t_{\text{y}}}
\newcommand{\privdelta}{\delta_{\text{x}}}
\newcommand{\verdelta}{\delta_{\text{y}}}
\newcommand{\prive}{\mathbf{e}_{\text{x}}}
\newcommand{\vere}{\mathbf{e}_{\text{y}}}
\newcommand{\verl}{\ell}

\newcommand{\protname}{\textsf{Maverick}\xspace}

\newif \ifcomments
\commentstrue %
\ifcomments
    \newcommand{\ben}[1]{\dtnote[]{Ben: #1}}
    \newcommand{\wenhao}[1]{\dtcolornote[]{blue}{Wenhao: #1}}
    \newcommand{\amin}[1]{\dtcolornote[]{orange}{Amin: #1}}   
    \newcommand{\fanz}[1]{\dtcolornote[]{red}{FZ: #1}}
    \newcommand{\katerina}[1]{\dtcolornote[]{green}{Katerina: #1}}
    \newcommand{\babis}[1]{\dtcolornote[]{red}{Babis: #1}}
\else
    \newcommand{\ben}[1]{}
    \newcommand{\wenhao}[1]{}
    \newcommand{\amin}[1]{}
    \newcommand{\fanz}[1]{}
    \newcommand{\babis}[1]{}
    \newcommand{\katerina}[1]{}
\fi

\usepackage{float}
\floatstyle{plain}
\newfloat{protocol}{htbp}{prot}
\floatname{protocol}{Protocol}
\crefname{protocol}{Protocol}{Protocols}
\Crefname{protocol}{Protocol}{Protocols}

\usepackage{pifont}

\declaretheorem[numberwithin=section]{theorem}
\declaretheorem[sibling=theorem]{lemma}

\declaretheorem[sibling=theorem]{definition}
\declaretheorem[sibling=theorem]{remark}

\title{\protname: Private and Verifiable LLM Inference Made Practical via\\Matrix--Vector Multiplication Delegation}
\author{
{\rm Ben Merbaum$^*$}\\
Yale University
\and
{\rm Mohammad Amin Raeisi$^*$}\\
Yale University
\and
{\rm Wenhao Wang$^*$}\\
Yale University, IC3
\and
{\rm Charalampos Papamanthou}\\
Yale University
\and
{\rm Katerina Sotiraki}\\
Yale University
\and
{\rm Fan Zhang}\\
Yale University, IC3
}

\begin{document}
\maketitle
\def\thefootnote{*}\footnotetext{These authors contributed equally to this work and are ordered alphabetically.}\def\thefootnote{\arabic{footnote}}
\def\thefootnote{\dag}\footnotetext{Emails: \{\texttt{ben.merbaum}, \texttt{amin.raeisi},\\ \texttt{wenhao.wang}, \texttt{charalampos.papamanthou},\\ \texttt{katerina.sotiraki}, \texttt{f.zhang}\}@\texttt{yale.edu}.}\def\thefootnote{\arabic{footnote}}

\begin{abstract}
Open-source \acp{llm} are increasingly competitive with closed-source models while offering transparency and the ability to run inference without exposing user inputs to a service provider.
However, running large-scale models locally requires substantial computational resources.
In practice, users may still resort to a third-party provider, giving rise to privacy and correctness concerns.
Existing solutions that address these problems often impose substantial server overhead or introduce additional trust assumptions.

In this paper, we present \protname, a novel approach to private and verifiable \ac{llm} inference based on a protocol for delegating matrix--vector multiplication, a dominant operation in \acp{llm}.
At its core, \protname provides, to our knowledge, the first information-theoretically sound verification protocol for matrix--vector multiplication delegation with \textit{transparent preprocessing}, \textit{efficient (batch) verification}, and \textit{virtually no server overhead}. We combine this verification primitive with LPN-based pseudorandom masking to provide input privacy.

We implement our matrix--vector delegation primitive and use it to build an end-to-end prototype of \protname, which we evaluate on \texttt{Qwen3-4B} by measuring throughput in tokens per second.
We evaluate client configurations with $1$--$8$ threads.
With one client thread and a CPU server using up to $128$ threads, \protname achieves throughput gains over local inference of up to $17\times$ when privacy masks are generated online, $45\times$ when they are precomputed, and $44\times$ when only verification is required.
With four client threads, the corresponding gains are $13\times$, $18\times$, and $17\times$.
When server computation is no longer the bottleneck, client-side microbenchmarks with simulated network delay show speedups of $12\times$--$20\times$, $34\times$--$135\times$, and $38\times$--$157\times$.

\end{abstract}

\section{Introduction}\label{sec:intro}

Open-source models such as DeepSeek, Qwen, and Kimi are increasingly matching the performance of closed-source models~\cite{deepseekV3,qwen3technicalreport,kimiK2}.
For example, as of July 2026, Kimi K3 outperforms Claude models on frontend design tasks~\cite{designarenaKimiK3}.
In addition, open-source models can be run locally, allowing users with sufficient hardware to keep their inputs private from service providers.
Their publicly available weights also offer better transparency~\cite{fmti} and reproducibility~\cite{chen2023chatgpt}, while reducing dependence on provider-controlled changes to pricing, model availability, and access policies.
Despite various advantages, their substantial computational requirements make local deployment impractical or undesirable for most users (e.g., running Kimi K3 requires a cluster of high-end GPUs estimated to cost \$0.55--\$1 million to set up~\cite{duola2026kimi}).
In practice, users resort to specialized service providers, reintroducing privacy and correctness concerns. 

The problem of private and verifiable outsourcing of AI and \ac{llm} inference has been extensively studied.
However, aside from recent systems based on \acp{tee}~\cite{redpillConfidentialAI,anthropicConfidentialInference,cryptoAISurvey}, existing cryptographic approaches typically incur significant overhead on the server side. For instance, proof systems~\cite{ghodsi2017safetynets,ZEN21,zkCNN21, verfCNN,zkLLM24,benno2025joltatlas,QuickSilver21, Mystique21,AntMan22,lycklama2025artemis,zkGPT25, deepprove} and homomorphic encryption~\cite{CryptoNets16,MiniONN17, Gazelle18,Delphi20,Cheetah22,Iron22,BOLT24,Rhombus24, BumbleBee25,Nexus25,BLB25} have been used, either individually or in combination, to provide verifiability, input privacy, or both. However, generating proofs and performing \ac{fhe} computations can be orders of magnitude more expensive than the original workload.
Solutions based on \acp{tee} offer practical performance, though their security guarantees may not always hold in practice~\cite{chuangTEEFailBreakingTrusted2026}.

In this paper, we present a lightweight protocol for private and verifiable outsourced \ac{llm} inference.
Our protocol allows a \textit{computationally constrained client}, such as a mobile device, to use an open-source model run by a powerful, untrusted server while hiding the client's private input and supporting verification, all with \textit{negligible} server overhead.

In particular, our technical contributions are as follows.
\textbf{1)} A protocol for fast \ac{vMVMD}. To our knowledge, it is the first information-theoretically sound construction with transparent preprocessing and efficient verification runtime. \textbf{2)} We combine \ac{vMVMD} with pseudorandom masks using coding theory to hide client inputs without introducing server overhead, providing \ac{pvMVMD}. \textbf{3)} Building on \ac{pvMVMD}, we present \protname, a system for private and verifiable \ac{llm} inference.

\protname outsources the linear operations of \ac{llm} inference to an untrusted server.
Nonlinear operations~\cite{vaswani2017attention} are inexpensive by comparison and are performed locally.
We present an end-to-end prototype implementation and evaluation using the open-source model Qwen3-4B.

\subsection{Verifiable Matrix--Vector Multiplication Delegation (vMVMD)}
\label{subsec:intro-approach}

In a \ac{vMVMD} protocol, a client with input $\bx\in\F^n$ verifies that a server's output $\by\in\F^m$ satisfies $\by=\bM\bx$ for a public matrix $\bM\in\F^{m\times n}$.
Since the client can always compute $\by$ locally in $O(mn)$ time, verification time must be less than $O(mn)$ for delegation to be meaningful.
One idea is to apply Freivalds' algorithm~\cite{Freivalds1977ProbabilisticMC}. Given matrices $\bA,\bB,\bC$, Freivalds' algorithm checks whether $\bA\bB\stackrel{?}{=}\bC$ by sampling a random vector $\br$ and checking
$(\br^\top \bA)\bB \stackrel{?}{=} \br^\top \bC$.
However, when applied to the \ac{mvm} $\bM\bx\stackrel{?}{=}\by$, it requires the client to compute $\br^\top \bM$, which is as expensive as computing $\by$ locally.

Our idea is to use a \textit{sparse} vector in Freivalds' algorithm.
Rather than sampling a uniform challenge $\br$, we sample a sparse vector $\be$ from $\F^m$ with $t$ nonzero elements.
To check $\bM\bx\stackrel{?}{=}\by,$
the client computes $(\be^\top\bM)\bx$ and $\be^\top\by$ in time $O(tn)$ and accepts if and only if they match.
When $t=O(\kappa)$, verification takes $O_\kappa(m+n)$ time instead of $O(mn)$ time.

However, this strawman protocol is not sound. 
Consider an invalid claim $\bM\bx\neq\by$ for which $\bM\bx$ and $\by$ differ at only one index, say $k$. The above protocol only rejects if the $k$-th element of $\be$ happens to be nonzero, which happens with probability $t/m$, i.e., the soundness error is at least $1-t/m$.

To reduce the soundness error, we use \emph{error-correcting codes} to perform checks on encoded values.
Let $\bG\in\F^{m\times N}$ be the generator matrix of a linear code.
The client samples a sparse challenge (of length $N$) and checks $(\be^\top\bG^\top\bM)\bx=\be^\top\bG^\top\by$.
Intuitively, the encoding spreads every nonzero error vector $\bDelta\coloneqq\bM\bx-\by$ into a codeword $\bG^\top\bDelta\in\F^N$ with enough nonzero entries that its support is likely to intersect the support of $\be$.
This requires a code with high minimum Hamming weight (equivalently, high relative distance) and also fast encoding for fast client online time, properties satisfied by, for example, RAA codes~\cite{distanceRAAcodes}.

Computing $\bG^\top\bM$ is relatively heavy, but $\bQ=\bG^\top \bM$ is independent of client inputs and thus can be prepared in a \textit{preprocessing phase}.
The preprocessing adds no extra trust assumption because both $\bG$ and $\bM$ are public and anyone can verify the correctness of $\bQ$.
With $\bQ$ preprocessed, verification is efficient since $\be$ is sparse and $\bG$ has linear-time encoding. 

To further optimize client verification time, we develop a protocol for efficiently verifying a batch of \acp{mvm}
involving different matrices and vectors. For each matrix $\bM_b$, preprocessing produces $\bQ_b=\bG^\top\bM_b$. The same sparse challenge $\be$ can be shared across all claims in a repetition, but the client would still need to compute the expensive vectors $\bQ_b^\top\be$ separately for every matrix. We instead stack the computation of $\bQ_b^\top \be$ into a single \ac{mvm} and delegate it to the server. We then recursively apply \ac{vMVMD} to verify this computation. Crucially, its input is now the sparse vector $\be$ itself, so the second check can be evaluated efficiently by the client.

\subsection{Private and Verifiable Matrix--Vector Multiplication Delegation (pvMVMD)}
We now show how to add privacy (i.e., hiding $\bx$ from the server) on top of vMVMD without introducing server overhead.
A naive construction is to have the client sample a uniformly random mask $\bx'$ and send $\widehat{\bx}=\bx+\bx'$ to the server; upon receiving $\widehat{\by}$, the client performs vMVMD to check $\bM\widehat{\bx}\stackrel{?}{=}\widehat{\by}$.
To remove the mask from $\widehat{\by}=\bM(\bx+\bx')$, however, the client needs to compute $\bM\bx'$ and subtract it from $\widehat{\by}$;
computing $\bM\bx'$ takes $O(mn)$ work.

To avoid this overhead, we use another code $\privG$ to generate pseudorandom privacy masks $\bx'=\privG\prive$  from sparse random vectors $\prive$ under standard cryptographic assumptions.
To distinguish this code from the one used for verification, we denote it by $\privG$, using the subscript $\text{x}$ because privacy masks are added to the input $\bx$.
The resulting mask can be removed efficiently as follows.
Because $\bM\privG$ is independent of the client's input, it can be computed during preprocessing.
With $\bP\coloneqq\bM\privG\in\F^{m\times\privN}$, we can write $\bM\bx'=(\bM\privG)\prive=\bP\prive$.
Since $\prive$ is sparse, computing $\bP\prive$ requires combining only the columns of $\bP$ corresponding to the nonzero entries of $\prive$.
In our construction, we use a code that satisfies the above requirement under the dual-LPN assumption~\cite{expand-accumulate-codes}. 

Overall, the client time is $O_{\lambda,\kappa}(m+n)$, while the server performs the ordinary multiplication without any overhead.

\subsection{\protname: Private and Verifiable Outsourced LLM Inference}
\label{subsec:intro-ml}
\protname uses \ac{pvMVMD} to outsource linear operations in \ac{llm} inference.
Before nonlinear operations, the server returns the result of each linear operation to the client. The client evaluates nonlinear operations locally until it reaches the next linear operation, which it outsources to the server.
In our evaluation, nonlinear operations account for less than $1\%$ of total inference time, so this simple design outsources most of the computation. 
Moreover, our implementation uses \textit{pipelining} so the client can use its idle time to prepare for upcoming computation, such as generating the output masks used in decryption.
To leverage batch verification, the client defers verification to the end of the inference execution. We call this strategy \emph{optimistic verification}.

We also implement and evaluate an optimization in which the client precomputes privacy masks.
This optimization is similar to how Beaver triples are pre-generated in \ac{mpc} protocols to speed up multiplications~\cite{beavertriples}.
We further present a verification-only variant of \protname for scenarios where privacy is not a concern, e.g., when user prompts do not include sensitive information. 

\subsection{Implementation and Evaluation}

We implement our \ac{pvMVMD} protocol and an end-to-end prototype of \protname. We evaluate \ac{pvMVMD} on square matrices of dimension up to $n=2^{14}$.
The client time grows approximately linearly with $n$, while the preprocessing and native \ac{mvm} grow quadratically.
At $n=2^{14}$, the client completes its online work in \SI{9.33}{\milli\second}, whereas performing the native matrix--vector multiplication locally would take \SI{116.00}{\milli\second}, yielding a $12.44\times$ client-side speedup.
We compare the \ac{pvMVMD} protocol against Dumas--Zucca~\cite{DumasZucca17}, which is specialized for verifying matrix--vector products, and against Sum-Check instantiated with BaseFold~\cite{thaler13,basefold}, which represents the general proof-based approach used by recent verifiable-ML systems, including DeepProve~\cite{deepprove}.
Across the evaluated dimensions up to $n=2^{12}$, our implementation is up to $34.8\times$ faster than Dumas--Zucca and up to $194.8\times$ faster than Sum-Check+BaseFold, and our reported client time additionally includes input privacy.

We then evaluate \protname using the open-source model \texttt{Qwen3-4B}~\cite{qwen3technicalreport}.
We study the throughput gains achieved when a client outsources inference to a powerful server, where throughput is measured in tokens per second.
We configure the client with $1$--$8$ threads to model a computationally constrained device.
For the server, we consider two regimes.
First, we measure the throughput gain from outsourcing inference to our CPU server using $32$--$128$ server threads, and we exclude network latency to isolate the computational benefit of outsourcing.
Second, we measure the client-side throughput limit when server computation is sufficiently fast, as could be achieved with GPU acceleration, so that the client becomes the bottleneck.
We evaluate this regime using client-side microbenchmarks that treat server computation as negligible.
To capture the impact of communication in realistic deployments, this experiment incorporates simulated network RTTs to evaluate client throughput.

With one client thread and up to $128$ server threads, standard, boost, and verification-only modes improve throughput over local inference by $17\times$, $45\times$, and $44\times$, respectively.
With four client threads, the corresponding improvements are $13\times$, $18\times$, and $17\times$.

Using the results from the first experiment, we compare \protname against DeepProve~\cite{deepprove}, which provides verification but not input privacy.
Although the model sizes, number of input tokens, and hardware differ, our verification-only mode evaluates \texttt{Qwen3-4B} in \SI{531.6}{\milli\second}, including inference, with \SI{45.1}{\milli\second} of verification.
DeepProve's proof generation takes \SI{34.2}{\second} for GPT-2 and \SI{81.6}{\second} for Gemma~3, excluding native inference.
Its corresponding verification times are \SI{1.35}{\second} and \SI{1.91}{\second}.
Both models are at least $10\times$ smaller than \texttt{Qwen3-4B}.

In the second experiment, we measure steady-state client throughput under simulated network RTTs of \SI{50}{\milli\second} and \SI{100}{\milli\second} with sufficiently fast server computation.
Across $1$--$8$ client threads, standard mode improves throughput over local inference by $12\times$--$20\times$, boost mode by $34\times$--$135\times$, and verification-only mode by $38\times$--$157\times$.

\section{Related Work}
\label{sec:related}

HE- and MPC-based private-inference systems protect client inputs and, in some settings, model parameters, but replace native linear algebra with encrypted or secure computation~\cite{CryptoNets16, SecureML17,MiniONN17,Gazelle18,Delphi20,cryptflow2,Cheetah22,Iron22,Rhombus24,BOLT24,Sigma24,Nexus25,BumbleBee25,Shaft25}. Specialized protocols exploit the structure of linear computation to
reduce this overhead. EMVP~\cite{EMVP25}, for instance, uses structured codes to efficiently delegate encrypted matrix--vector products while additionally hiding the stored matrix, but assumes a semi-honest server and cannot provide verifiability. Braverman and Newman~\cite{braverman2025practical} use trapdoored matrices for delegated linear algebra, including matrix multiplication and batched \acp{mvm} with efficiency in the amortized setting. Relatedly, Abbaszadeh et al. \cite{abbaszadeh2025single} use dual-LPN-friendly codes and transparent preprocessing to privately delegate multi-scalar multiplications. However, these works do not provide efficient verifiable delegation of a single \ac{mvm} in our setting.

A large body of work verifies neural-network and \ac{llm} inference using interactive proofs and SNARKs~\cite{ghodsi2017safetynets,ZEN21,zkCNN21,zkLLM24, zkGPT25,deepprove,verfCNN,lycklama2025artemis,ModularSumcheck23}. These systems can certify substantially more of the inference computation, but require additional proof-generation work at the server. Since linear layers account for a substantial fraction of inference computation, efficient verification of matrix products is a central subproblem. A direct Freivalds-style check provides information-theoretic
soundness, but the verification requires quadratic time~\cite{Freivalds1977ProbabilisticMC}. Prior work has also combined Freivalds-style verification with coding theory to reduce its randomness complexity \cite{bennett2024matrixmultiplicationverificationusing}; however, the resulting verification time remains quadratic. Slalom~\cite{slalom} performs quadratic verification work in a private preprocessing by computing a secret reusable Freivalds state; however, this state encounters leakage over a bounded number of queries. Further, Slalom uses fresh one-time masks to provide privacy for \acp{mvm}, but the corresponding demasking term is computed in preprocessing via an independent \ac{mvm}, leaving quadratic preprocessing work per query. Slalom at the Carnival~\cite{satc} reduces this per-query privacy-preprocessing cost by deriving fresh masks from reusable public structure. However, we show in \cref{app:satc} that its vector-space instantiation can leak linear information about the client's input, violating its claimed input-privacy guarantee. 

Publicly verifiable schemes such as Fiore--Gennaro~\cite{FG12}, Dumas--Zucca~\cite{DumasZucca17}, and general sum-check or SNARK-based approaches~\cite{thaler13} instead require authentication or proof generation in addition to the underlying computation, introducing considerable additional server-side work beyond the target \ac{mvm}. LAMP~\cite{LAMP2026} also combines Freivalds-style verification with error-correcting codes, but uses them inside a commit-and-prove SNARK for matrix multiplication.

Other systems that provide both privacy and verifiability include verifiable computation over encrypted data through FHE+ZKP and verifiable-FHE constructions~\cite{FGP14,vFHE23,HELIOPOL24,Laminate25, FioreVCAHE25, fiore2020, phalanx}, SNARGs specialized to matrix multiplication over encrypted data \cite{practicalSNARGMatmul26}, plaintext-authentication approaches such as VERITAS \cite{veritasfhe24}, whose main encodings were subsequently cryptanalyzed in the fully homomorphic setting\cite{VeritasAttack25}, VOLE-based proof systems ~\cite{QuickSilver21,Mystique21,AntMan22, Wolverine21}, DataSeal's FHE-based redundancy checks~\cite{dataseal}, and schemes combining homomorphic matrix computation with Freivalds-style verification ~\cite{zhang2020,liu2023}. These approaches require encrypted server computation, an auxiliary proof or authentication protocol, trusted hardware, or client-specific secret verification state. In contrast, Maverick achieves linear online client work with reusable transparent preprocessing and no secret verifier state; its base protocol requires neither trusted hardware nor heavy cryptographic machinery, and the server performs only one field \ac{mvm}.

\parhead{Concurrent work} We are aware of two concurrent papers, MOSAIC~\cite{chiang2026mosaic}, and Gupta et al.~\cite{gupta26}, which have similarly employed matrix--vector multiplication delegation toward outsourcing LLM inference.
MOSAIC provides privacy for the client input and model, but does not provide verifiability.
Like \protname, MOSAIC achieves asymptotically optimal online client time for an \ac{mvm}; however, under the concrete parameterizations of the two protocols, \protname has a substantially smaller constant, while its online client work additionally includes information-theoretic verification. Moreover, MOSAIC relies on both LWE and LPN and introduces Gaussian noise into the delegated computation, resulting in a nonzero protocol-error probability, whereas the MVM delegation of \protname introduces no such error. Furthermore, Gupta et al.~\cite{gupta26} provide both privacy and verifiability, but security relies on two non-colluding servers, whereas \protname outsources inference to only a single server. Meanwhile, their verifiability mechanism requires a trusted setup party (as opposed to our transparent preprocessing) not to collude with the server performing outsourced computation; the verification protocol requires group rather than field operations, and soundness is computational rather than information-theoretic.
Nevertheless, we note that both concurrent works require substantially less client-side storage for preprocessing state, which is a key bottleneck of our approach.

\section{Preliminaries}
\parhead{Notation}
We use $\F$ to denote a finite field, $\F^\ast=\F\setminus\{0\}$, and $\F_q$  for a prime power $q$ to denote the field with $q$ elements. 
Matrices are denoted by \textbf{bold} uppercase letters (e.g., $\bM$) and vectors by \textbf{bold} lowercase letters (e.g., $\bx$). For a vector $\bv$, we write $\wt(\bv)$ to denote its Hamming weight, the number of nonzero entries of $\bv.$ 
For a vector $\bv$ of length $n$, we use $\supp(\bv)$ to denote the support, the set of indices in $[n]$ where $\bv$ is nonzero. We say that a vector $\bv$ is $t$-sparse for a constant $t$ if $\wt(\bv)=t$. We denote the set of $t$-sparse vectors in $\F^n$ by $\calE_{t}^{n}$. We denote $\lambda$ as a computational security parameter and $\kappa$ as a statistical parameter. 
$\negl(\cdot)$ denotes a negligible function and $X\approx_cY$ denotes computational indistinguishability. 

\subsection{Freivalds' Algorithm}\label{subsec:prelim:freivalds}

Given matrices $\bA,\bB,\bC$, Freivalds' algorithm~\cite{Freivalds1977ProbabilisticMC} allows a verifier to efficiently check whether $\bA\bB\stackrel{?}{=}\bC$ without explicitly computing the matrix product $\bA\bB$. The verifier samples a uniformly random vector $\br$ of the appropriate dimension and checks whether
$(\br^\top \bA)\bB \stackrel{?}{=} \br^\top \bC$.
For square $n\times n$ matrices, this requires only matrix--vector multiplications, costing $O(n^2)$ field operations, whereas explicitly computing $\bA\bB$ costs $O(n^3)$ operations using the standard algorithm. If $\bA\bB=\bC$, the check always accepts; otherwise, it accepts with probability at most $|\F|^{-1}$.

For verifying a matrix--vector multiplication $\bA\mathbf{b}\stackrel{?}{=}\mathbf{c}$, Freivalds' algorithm does not asymptotically improve the verifier's work. Computing $(\br^\top \bA)\bb\stackrel{?}{=}\br^\top \bc$ requires $O(n^2)$ field operations, equivalent to directly computing $\bA\mathbf{b}$.

\subsection{Linear Error-Correcting Codes}\label{subsec:prelim:codes}
An error-correcting code encodes messages as fixed-length vectors, called codewords, by adding redundancy that enables error detection or correction; it is \textit{linear} when its codewords form a linear subspace.
Mathematically, let $n<N$ be integers and let $\bG\in\F^{n\times N}$ have linearly independent rows.
The corresponding linear code is $\calC=\calC(\bG)\coloneqq\{\bG^\top\bv:\bv\in\F^n\}\subseteq\F^N$, and $\bG$ is its generator matrix.
The parameters $n$ and $N$ are the dimension and block length, respectively, and $R(\calC)=n/N$ is the rate of $\calC$.
The minimum distance $d$ of a linear code $\calC$ is the minimum Hamming weight of a nonzero codeword.
The \emph{relative distance} of $\calC$ is $\delta=d/N$.

We say that a family of codes $\{\calC_n\}_{n=1}^\infty$ over a field $\F$ is \emph{asymptotically good} if both the code rate $R$ and the relative distance $\delta$ are constant for large enough $n$ (see, e.g. ~\cite{essentialcodingtheory}). Such codes provide efficiency in our constructions as they enable maximal error detection under minimal redundancy.

\subsection{Learning Parity with Noise}
\label{subsec:prelim:lpn}

Let $\calG$ be a distribution over code generators $\bG\in\F^{n\times N}$ and let $\calE$ be a noise distribution over sparse vectors $\be\in\F^N$. Informally, we say the $(\calG,\calE,\F)$-dual-LPN assumption holds for dimensions $n$ and $N$ if
  $  (\bG,\bG\be)\approx_c (\bG,\bu)$
where $\bG\leftarrow \calG,\be\leftarrow \calE$, and $\bu\leftarrow \F^n$. We say $\calG$ is \textit{dual-LPN-friendly} if this assumption is believed to hold for the exact noise distributions $\calE=\calE_t^N$ for a suitable sparsity parameter $t$. A formal definition of the LPN assumption and a discussion of plausibly dual-LPN-friendly code families is provided in~\cref{app:lpn}.

\section{Model and Security Definitions}\label{sec:model}

We study cryptographic primitives for private and verifiable delegation of matrix--vector products, and their applications for LLM inference.
The core primitives, \ac{vMVMD} and \ac{pvMVMD}, involve a \textit{client} holding private inputs $\bx$, and an untrusted \textit{server}.
In both primitives, the client wishes to obtain $\by=\bM \cdot \bx$ from the server for a given public matrix $\bM$.
\ac{vMVMD} allows the client to verify the correctness of $\by$, and \ac{pvMVMD} additionally hides $\bx$ from the server.

Before defining their security guarantees, we explain what it means for these primitives to be efficient and field-agnostic.
For a matrix $\bM\in\F^{m\times n}$ and vectors $\bx\in\F^n$ and $\by\in\F^m$, computing $\by=\bM\bx$ locally takes $O(mn)$ time.
We call a protocol \textbf{\textit{efficient}} if it runs in time $o_{\lambda,\kappa}(mn)$ and \textbf{\textit{asymptotically time-optimal}} if it runs in time $O_{\lambda,\kappa}(m+n)$, matching the time required to process the input and output vectors $\bx$ and $\by$.

We call a protocol \textit{field-agnostic} if it treats $\F$ as a black box, allowing field elements to have arbitrary labels and accessing them only through an oracle for field operations.
As in~\cite{EMVP25}, we assume that the oracle provides access to addition, subtraction, multiplication, inversion, unit, zero testing, and uniform sampling of field elements.

\subsection{Verifiable Matrix--Vector Multiplication Delegation (vMVMD)}
\label{subsec:vMVMD}

In \acf{vMVMD}, a client would like to efficiently verify if a server's claim $\by\stackrel{?}{=}\bM\bx$ is correct. We formalize this primitive in \cref{def:vMVMD}.

A \ac{vMVMD} protocol consists of two algorithms: $\textsc{Preprocess}$, which takes the matrix $\bM$ and outputs public parameters $\pp$, and $\textsc{Verify}$, which uses $\pp$ to verify input-output pairs $(\bx,\by)$.

Our soundness definition guarantees information-theoretic (statistical) security. In particular, it is not only computationally hard for an adversary to find a pair $(\bx,\by)$ that fools $\textsc{Verify}$, but furthermore, there does not exist any such pair $(\bx,\by)$ that fools $\textsc{Verify}$ with non-negligible probability taken over the random coins within the $\textsc{Verify}$ algorithm.

\begin{definition}[Verifiable Matrix--Vector Multiplication Delegation]\label{def:vMVMD} A \ac{vMVMD} protocol $\Pi$ is a pair of PPT algorithms $(\textsc{Preprocess}, \textsc{Verify})$ defined by: 
\begin{itemize}[leftmargin=*]
    \item $\textsc{Preprocess}(1^\kappa,\bM)\rightarrow \pp$. On input statistical parameter $\kappa$ and matrix $\bM\in\F^{m\times n}$, outputs public parameters $\pp.$
    \item $\textsc{Verify}(\pp,\bx,\by)\rightarrow \{0,1\}.$ On input public parameters $\pp$, input vector $\bx\in\F^n$, and output vector $\by\in\F^m$, either accepts (outputs 1) or rejects (outputs 0).
\end{itemize}
The protocol should satisfy the following properties:
\begin{itemize}[leftmargin=*]
    \item \textbf{Completeness}: $\Pi$ is {complete} if for all $\bM \in \F^{m \times n}$ and $\bx \in \F^n$,
    \[
    \Pr \left[
        b = 1 ~\middle|~ \begin{aligned}
            &\pp\leftarrow \textsc{Preprocess}(1^\kappa,\bM) \\
            &\by=\bM\bx\\
            & b \leftarrow \textsc{Verify}(\pp,\bx,\by)
        \end{aligned}
        \right] =1.
    \]
    \item \textbf{Soundness}: $\Pi$ is {sound} if for all unbounded adversaries $\adv$ there exists a negligible function $\negl(\cdot)$ such that for all $\bM\in\F^{m\times n}$ and $\bx\in\F^n$\footnote{The adversary is unbounded, providing information-theoretic security.},

    \[
    \resizebox{0.93\columnwidth}{!}{$
    \Pr \left[
        \by \ne \bM \bx \;\land\; b = 1
        ~\middle|~ \begin{aligned}
            &\pp\leftarrow\textsc{Preprocess}(1^\kappa,\bM) \\ &\by \leftarrow \adv(\pp, \bM,\bx) \\
            & b \leftarrow \textsc{Verify}(\pp,\bx,\by)
        \end{aligned}\right] < \negl(\kappa).
        $}
    \]
\end{itemize}
\end{definition}

\subsection{Private and Verifiable Matrix--Vector Multiplication Delegation (pvMVMD)}

In private and verifiable matrix--vector multiplication delegation (\ac{pvMVMD}), a client would like to outsource the computation of $\bM \bx$ to the server without revealing $\bx$. The client encrypts $\bx$ to obtain $\widehat{\bx}$, the server evaluates $\widehat{\by}=\bM\widehat{\bx}$, and the client verifies and decrypts the result. The server may deviate from the prescribed protocol, and a public verification algorithm provides soundness for the client against such deviations.
We formalize this primitive in \cref{def:pvMVMD}.

\begin{definition}[Private and Verifiable Matrix--Vector Multiplication Delegation]\label{def:pvMVMD}
A \ac{pvMVMD} protocol $\Pi$ is a tuple of PPT algorithms $\textsc{(Preprocess,Encrypt,Verify,Decrypt)}$ defined by:
\begin{itemize}[leftmargin=*]
    \item $\textsc{Preprocess}(1^\lambda,1^\kappa,\bM)\rightarrow \pp$. On input security parameter $\lambda$, statistical parameter $\kappa$, and matrix $\bM\in\F^{m\times n}$, outputs public parameters $\pp$.
    \item $\textsc{Encrypt}(\pp,\bx)\rightarrow (\widehat{\bx},\st)$. On input public parameters $\pp$ and plaintext vector $\bx\in\F^n$, outputs the ciphertext $\widehat{\bx}$ and a private state $\st$.
    \item $\textsc{Verify}(\pp, \widehat{\bx},\widehat{\by})\rightarrow \{0,1\}.$ On input public parameters $\pp$, ciphertext $\widehat{\bx}\in\F^n$, and output vector $\widehat{\by}\in\F^m$, either accepts (outputs 1) or rejects (outputs 0).
    \item $\textsc{Decrypt}(\pp,\st,\widehat{\by})\rightarrow \by.$ On input public parameters $\pp$, private state $\st$, and ciphertext $\widehat{\by}\in\F^m$, outputs the plaintext vector $\by$ which is claimed to equal $\bM\bx.$
\end{itemize}

The protocol should satisfy the following properties:
\begin{itemize}[leftmargin=*]
    \item \textbf{Completeness}: $\Pi$ is complete if, whenever $\Pi$ is executed honestly by both the client and the server, then the client outputs $\by=\bM\bx$ on input $\bx$ and public matrix $\bM$.
    Formally, for any $\bM\in\F^{m\times n}$ and $\bx \in \F^{n}$,
\[
\resizebox{0.8\columnwidth}{!}{$
\Pr\left[\by = \bM \bx \;\land\;  b=1
\;\middle|\;\begin{aligned}
    &\pp\leftarrow\textsc{Preprocess}(1^\lambda,1^\kappa,\bM)\\&(\widehat{\bx},\st)\leftarrow\textsc{Encrypt}(\pp,\bx)\\
    &\widehat{\by}=\bM\widehat{\bx}
    \\
    & b \leftarrow  \textsc{Verify}(\pp,\widehat{\bx},\widehat{\by})\\
    &\by\leftarrow \textsc{Decrypt}(\pp,\st,\widehat{\by}) 
\end{aligned}\right]
=1.
$}
\]

\item \textbf{Soundness}: $\Pi$ is sound if a computationally unbounded server $\adv$ that returns a vector $\widehat{\by}\neq \bM\widehat{\bx}$ will be rejected by $\textsc{Verify}$ with high probability.
Formally, there exists some negligible function $\negl(\cdot)$ such that for any $\bM\in\F^{m\times n}$ and $\bx \in \F^n$,
\[
\resizebox{0.9\columnwidth}{!}{$
\Pr\left[
\by\ne \bM\bx\;\land\; b=1
\;\middle|\;\begin{aligned}
    &\pp\leftarrow\textsc{Preprocess}(1^\lambda,1^\kappa,\bM)\\&(\widehat{\bx},\st)\leftarrow\textsc{Encrypt}(\pp,\bx)\\& \widehat{\by}\leftarrow\adv(\pp,\bM,\widehat{\bx}) \\& b \leftarrow \textsc{Verify}(\pp,\widehat{\bx},\widehat{\by})
    \\&\by\leftarrow \textsc{Decrypt}(\pp,\st,\widehat{\by})  
\end{aligned}\right]
<\negl(\kappa).
$}
\]

\item \textbf{Client privacy}: $\Pi$ is private for the client if executing $\Pi$ reveals no information about $\bx$ to the server.
Formally, there exists a PPT simulator $\mathsf{Sim}$ such that for any matrix $\bM\in\F^{m\times n}$ and $\bx\in\F^n$, the following distributions are computationally indistinguishable with respect to $\kappa$ and $\lambda$:
\[
\mathsf{Sim}(1^\lambda,1^\kappa, \bM) \approx_c \mathsf{View}_\textnormal{Server}^{\Pi}(\bM,\bx),
\]
where $\mathsf{View}_\textnormal{Server}^{\Pi}(\bM,\bx)\coloneqq(\pp,\bM,\widehat{\bx})$ denotes the server's view, given by $\pp\leftarrow \textsc{Preprocess}(1^\lambda,1^\kappa,\bM)$ and $(\widehat{\bx},\st)\leftarrow \textsc{Encrypt}(\pp,\bx).$

\end{itemize}
\end{definition}

\begin{remark}[Multi-query privacy]
Client privacy extends to multiple queries via a standard hybrid argument, and the multi-query distinguishing advantage for $q$ queries is at most $q$ times the single-query advantage.
\end{remark}

\begin{remark}[Composition of vMVMD with pvMVMD] We remark that any possibly unsound pvMVMD protocol can be made sound by using the $\textsc{Verify}$ algorithm of a sound vMVMD protocol. A formal proof appears in \cref{app:soundness-composition}.
\end{remark}

\subsection{Transparent preprocessing}

We call the preprocessing \emph{transparent} if its output contains no secret trapdoor or client-specific verification key and can be validated against the public matrix $\bM$ by any party.
The server or a third party may therefore generate and publish the preprocessing state once for all clients delegating the same matrix.
A client that does not trust the publisher can independently validate the state at a cost comparable to preprocessing and then reuse it for all future queries involving $\bM$. Alternatively, the publisher may attach a proof of correct preprocessing to reduce this per-client validation cost.

\section{Verifiable Matrix--Vector Multiplication Delegation (vMVMD)}
\label{sec:verification}

In this section, we present a novel asymptotically time-optimal \ac{vMVMD} protocol using tools from coding theory.
We present our construction in \cref{subsec:verification:construction}, demonstrate its security in \cref{subsec:verification:security}, analyze its costs in \cref{subsec:verification:efficiency}, and extend it with a batch verification protocol in \cref{sec:verification:batch-verification}, tailored to the batched \ac{mvm} claims arising in our LLM application.

\subsection{Our Construction}\label{subsec:verification:construction}

\begin{protocol}[!t]
\begin{mdframed}
\small
\setlist[itemize]{nosep}
\begin{center}
\colorbox{gray!20}{$\PivMVMD$}
\end{center}
\underline{\textbf{Input parameters}}:
\vspace{0.1em}
\begin{itemize}[leftmargin=*]
    \item Distribution $\calG$ over generator matrices $\bG \in \F^{m \times N}$ of codes with rate $R=\Theta(1)$, relative distance $\delta=\Theta(1)$, and linear encoding time.
    \item Sparsity parameter $t=t(\kappa)$ and repetition parameter $\ell=\ell(\kappa)$.
    \item Block length $N=O(m).$
    \item Field oracle $\F$.
\end{itemize}

\vspace{0.5em}
\underline{\textbf{\textsc{Preprocess}}$(1^\kappa,\bM)\rightarrow \pp$}:
\vspace{0.1em}
\begin{itemize}[leftmargin=*]
    \item Sample $\bG\leftarrow \calG$.
    \item Compute $\bQ\coloneqq\bG^\top \bM\in\F^{N\times n}$ and output $\pp = (\bG, \bQ,  t,\ell)$.
\end{itemize}

\vspace{0.5em}
\underline{\textbf{\textsc{Verify}}$(\pp,\bx,\by)\rightarrow \{0,1\}$}:
\vspace{0.1em}
\begin{itemize}[leftmargin=*]
    \item Compute $\bG^\top \by\in\F^N$
    \item Repeat $\ell$ times:
    \begin{itemize}[leftmargin=*]
        \item Sample $\be\leftarrow \calE_{t}^{N}$.
        \item Compute $v_L\coloneqq(\be^\top \bQ)\bx$ and $v_R\coloneqq\be^\top (\bG^\top \by)$
        \item If $v_L\neq v_R$, output 0 (reject) and halt.
    \end{itemize}
    \item If all $\ell$ checks pass, output 1 (accept).
\end{itemize}
\end{mdframed}
\caption{Verifiable matrix--vector multiplication delegation protocol}
\label{prot:vMVMD}
\end{protocol}

We begin by modifying Freivalds' algorithm to enable faster client verification of \acp{mvm}. Recall that $\calE_{t}^{m}$ denotes the set of $t$-sparse vectors of dimension $m$.
Rather than sampling a Freivalds' challenge uniformly at random, we sample a $t$-sparse vector $\be\in \calE_{t}^{m}$.
Given a verification claim
$\bM\bx\stackrel{?}{=}\by,$
the client can compute $(\be^\top\bM)\bx$ and $\be^\top\by$ in time $O(tn)$ and accept if and only if they match.
When $t = O(\kappa)$, we have a protocol that runs in $O_\kappa(m + n)$ to verify instead of $O(mn)$.

However, the soundness of this protocol is not ideal.
Consider an invalid expression $\bM\bx\neq \by$, and define the \emph{error vector} $\bDelta\coloneqq\bM\bx-\by\in\F^m$.
The previous protocol accepts whenever $\be^\top \bDelta=0$. For instance, if there is exactly one nonzero entry in $\bDelta$ (i.e., $\wt(\bDelta) = 1$), then the protocol rejects only if $\be$ is nonzero in the same coordinate as $\bDelta$, which occurs with probability $t/m$. This yields a soundness error of at least $1-t/m$.

To reduce the soundness error, we use \emph{linear} codes (see \cref{subsec:prelim:codes}) for dimension $m$ and a block length $N$ with the following two properties: (1) the code has \emph{high relative distance} $\delta$ when the rate $R=m/N$ is kept constant for security, and (2) the code has \emph{fast encoding time} for efficiency.
Intuitively, the encoding spreads every nonzero error vector $\bDelta$ into a codeword $\bG^\top\bDelta\in\F^N$ with at least $\delta N$ nonzero entries.

Let $\bG\in\F^{m\times N}$ be the generator matrix of such a linear code.
Rather than verifying the original claim directly, we verify its encoding under $\bG^\top$.
That is, the client samples a sparse challenge $\be\in\calE_t^N$ and accepts if $(\be^\top\bG^\top\bM)\bx=\be^\top\bG^\top\by$.

If $\bDelta=\bM\bx-\by\neq 0$, the minimum distance of the code guarantees that $\bG^\top\bDelta\in\F^N$ has at least $\delta N$ nonzero entries.
Thus, the support of a uniformly random $t$-sparse vector $\be$ is likely to intersect the support of $\bG^\top\bDelta$.

Computing $\bG^\top\bM$ for every query incurs heavy work, but we recognize that the term $\bQ=\bG^\top \bM$ is independent of client input, and thus can be prepared in a preprocessing phase.
Anyone can verify the correctness of $\bQ$ since both $\bG$ and $\bM$ are public; our preprocessing phase adds no extra trust assumption.
The full protocol is formally presented in \cref{prot:vMVMD}.

Meanwhile, $(\be^\top \bQ)\bx$ and $\be^\top (\bG^\top \by)$ can be computed efficiently, since $\be^\top \bQ$ can be computed in time proportional to the sparsity of $\be$ and the fast encoding time of $\bG$ makes computing $\bG^\top \by$ efficient.

\subsection{Security}
\label{subsec:verification:security}
We show that $\PivMVMD$ satisfies soundness for suitable parameters $t$, $\ell$, and $\delta$.
We first prove the following lemma.

\begin{lemma}\label{lemma:vMVMD-soundness}
    Let $\bQ\coloneqq\bG^\top\bM\in\F^{N\times n}$, and let $\be$ be sampled uniformly from $\calE_t^N$.
    If the code generated by $\bG$ has relative distance $\delta$, then for any $\bx\in\F^n$ and $\by\in\F^m$ such that $\bM\bx\neq \by$, we have 
    $\Pr\left[\be^\top (\bG^\top \by) = (\be^\top \bQ)\bx \right] \leq \left(1-\delta\right)^t
    +
    \frac{1}{|\F|-1}$.
\end{lemma}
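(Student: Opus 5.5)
We first reduce the event to a statement about a single codeword. Set $\bDelta\coloneqq\bM\bx-\by\neq\mathbf{0}$ and $\bc\coloneqq\bG^\top\bDelta\in\F^N$. Since $\bQ=\bG^\top\bM$, linearity gives
\[
(\be^\top\bQ)\bx-\be^\top(\bG^\top\by)=\be^\top\bG^\top(\bM\bx-\by)=\be^\top\bc ,
\]
so the acceptance event is exactly $\be^\top\bc=0$. The rows of $\bG$ are linearly independent, so $\bc$ is a nonzero codeword. The relative-distance hypothesis then gives $\wt(\bc)\ge\delta N$; write $S\coloneqq\supp(\bc)$.

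Next we split on whether the support of $\be$ meets $S$. Sampling $\be\leftarrow\calE_t^N$ amounts to choosing a uniformly random $t$-subset $T\subseteq[N]$ and then independent uniform values in $\F^\ast$ on $T$. We bound
\[
\Pr[\be^\top\bc=0]\le\Pr[T\cap S=\emptyset]+\Pr[\be^\top\bc=0\mid T\cap S\neq\emptyset].
\]

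For the first term, $T$ is a uniform $t$-subset avoiding a set of size at least $\delta N$. Its probability is
\[
\binom{N-|S|}{t}\Big/\binom{N}{t}=\prod_{i=0}^{t-1}\frac{N-|S|-i}{N-i}\le\Big(1-\frac{|S|}{N}\Big)^t\le(1-\delta)^t,
\]
since each factor satisfies $\frac{N-|S|-i}{N-i}\le\frac{N-|S|}{N}$.

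For the second term, condition on $T$ with $T\cap S\neq\emptyset$. Fix some $j\in T\cap S$ and condition further on the values $e_i$ for all $i\in T\setminus\{j\}$. Then $\be^\top\bc=e_jc_j+\alpha$ for a fixed $\alpha\in\F$ and $c_j\neq0$. This vanishes only when $e_j=-\alpha/c_j$, a single value. Because $e_j$ is uniform over $\F^\ast$, that value is hit with probability at most $1/(|\F|-1)$ (it is $0$ if $\alpha=0$). Averaging over the conditioning preserves the bound, and adding the two terms gives the claim.

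The only point requiring care is the sampling model for $\calE_t^N$. The argument needs $\be$ uniform over $t$-sparse vectors with \emph{exactly} $t$ nonzeros, which makes the support a uniform $t$-subset and the nonzero values independent and uniform in $\F^\ast$. This factorization is what justifies both the hypergeometric bound and the conditioning on a single coordinate. The rest is routine linearity.
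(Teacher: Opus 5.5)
Your proof is correct and follows the paper's argument essentially step for step: you use the same reduction to $\be^\top\bG^\top\bDelta=0$, the same split on whether $\supp(\be)$ meets the support of the nonzero codeword, and the same hypergeometric bound $(1-\delta)^t$ for the first term. For the second term, you fix one coordinate in the intersection and condition on the others; this is just the degree-one case of the Schwartz--Zippel lemma over $\F^\ast$ that the paper invokes, written out explicitly.
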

\begin{proof}
    Let \(\bu\coloneqq\by-\bM\bx\in\F^m\) be the nonzero error vector and \(\bz\coloneqq\bG^\top\bu\in\F^N\) its encoding.
    The event $\be^\top(\bG^\top\by)=(\be^\top\bQ)\bx$ is equivalent to $\be^\top\bz=0$.
    Since $\bG$ generates a code of relative distance $\delta$, we have $\wt(\bz)\ge N\delta$. 

    Let $S_{\be}\coloneqq\supp(\be)$ and $S_{\bz}\coloneqq\supp(\bz)$.
    We bound $\Pr[\be^\top\bz=0]$ by splitting into two cases based on whether $S_{\be}$ and $S_{\bz}$ intersect and applying the law of total probability.
    \begin{align*}
        &\Pr[\be^\top \bz=0]\\
        =& \Pr[\be^\top \bz=0 \mid S_{\be}\cap S_{\bz}=\emptyset]\Pr[S_{\be}\cap S_{\bz}=\emptyset]\\
        +& \Pr[\be^\top \bz=0 \mid S_{\be}\cap S_{\bz}\neq\emptyset]\Pr[S_{\be}\cap S_{\bz}\neq\emptyset]\\
        \le& \Pr[S_{\be}\cap S_{\bz}=\emptyset] + \Pr[\be^\top \bz=0 \mid S_{\be}\cap S_{\bz}\neq\emptyset]
    \end{align*}
    
    \underline{Term 1.} By assumption, $|S_{\bz}|\geq N\delta$ and $|S_{\be}|=t.$ Since $S_{\be}$ is a uniformly random size-$t$ subset of $[N]$ and $S_{\bz}$ is fixed, then,
    \[
    \resizebox{0.97\linewidth}{!}{$
    \Pr[S_{\be}\cap S_{\bz}=\emptyset] 
    \le
    \frac{\binom{N-N\delta}{t}}{\binom{N}{t}}
    =
    \prod_{i=0}^{t-1} \frac{N-N\delta-i}{N-i}
    \le
    \left(\frac{N-N\delta}{N}\right)^t
    =
    (1-\delta)^t.$}
    \]

    \underline{Term 2.} Define the multivariate polynomial $f_{\bz}(\be)=\be^\top \bz$.
    Conditioned on $S_{\be}\cap S_{\bz}\neq\emptyset$, $f_{\bz}$ is nonzero and has total degree 1 in the variables $\{e_j\}_{j\in S_{\be}}$. These variables are sampled uniformly at random from the set $\F\setminus\{0\}$. By the Schwartz-Zippel lemma, the probability that this polynomial evaluates to $0$ is at most $\Pr[f_{\bz}(\be)=0 \mid S_{\be}\cap S_{\bz}\neq\emptyset] \leq \frac{1}{|\F|-1}.$

    Combining both cases yields the final probability.
\end{proof}

The soundness bound of \cref{lemma:vMVMD-soundness} can be amplified using independent repetitions.
We choose $t$ so that the two terms in the single-iteration bound are comparable, and repeat $\ell$ times to obtain soundness error at most $2^{-\kappa}$.
We show that $\PivMVMD$ satisfies \cref{def:vMVMD} with the following theorem.

\begin{theorem}
\label{thm:vMVMD-secure}
If the code generated by $\bG$ has relative distance $\delta$, then the protocol $\PivMVMD$ satisfies completeness and soundness as defined in \cref{def:vMVMD} for $|\F|\geq 4$, $\ell =   \left\lceil     \frac{\kappa}{\log_2(|\F|-1)-1} \right\rceil$, and $t = \left\lceil\frac{\max(\log_2(|\F|-1), \kappa + 1)}
       {-\log_2(1-\delta)}
        \right\rceil$.
\end{theorem}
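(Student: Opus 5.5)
Completeness is immediate: if $\by=\bM\bx$, then for every challenge $\be$ we have $v_L=(\be^\top\bG^\top\bM)\bx=\be^\top\bG^\top(\bM\bx)=\be^\top(\bG^\top\by)=v_R$ by associativity. Hence every one of the $\ell$ checks passes deterministically and $\textsc{Verify}$ outputs $1$ with probability one.

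For soundness, fix $\bM$ and $\bx$ and let $\adv$ be unbounded. The adversary's output $\by$ is chosen before $\textsc{Verify}$ samples its challenges. Moreover, $\pp=(\bG,\bQ,t,\ell)$ contains no information about the $\ell$ challenges $\be_1,\dots,\be_\ell$, which are sampled fresh and independently inside $\textsc{Verify}$. So it suffices to bound the acceptance probability for an arbitrary fixed $\by\neq\bM\bx$, over the randomness of the $\be_i$. Conditioning on $\by$ (and on $\bG$, which has relative distance $\delta$ by hypothesis), \cref{lemma:vMVMD-soundness} gives that each repetition accepts with probability at most
\[
p\coloneqq(1-\delta)^t+\frac{1}{|\F|-1},
\]
and by independence all $\ell$ repetitions accept with probability at most $p^\ell$.

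The remaining step is the parameter arithmetic. The choice of $t$ gives $-t\log_2(1-\delta)\ge\max(\log_2(|\F|-1),\kappa+1)$, hence $(1-\delta)^t\le\min\bigl(\frac{1}{|\F|-1},2^{-(\kappa+1)}\bigr)$. I would split into two cases.

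\emph{Case 1: $\log_2(|\F|-1)-1<\kappa$.} Use the first bound in the minimum to get $p\le\frac{2}{|\F|-1}=2^{-(\log_2(|\F|-1)-1)}$. Since $|\F|\ge4$, the exponent is positive, so $p<1$. The choice of $\ell$ gives $\ell(\log_2(|\F|-1)-1)\ge\kappa$, and therefore $p^\ell\le2^{-\kappa}$.

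\emph{Case 2: $|\F|$ is so large that $\frac{1}{|\F|-1}\le 2^{-(\kappa+1)}$.} Then $\ell=1$, and using the other bound in the minimum gives $p\le 2^{-(\kappa+1)}+2^{-(\kappa+1)}=2^{-\kappa}$. Since $\log_2(|\F|-1)\ge\kappa+1$ here, the first-case bound also already yields $p\le2^{-\kappa}$ directly, so the two cases are consistent.

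The main obstacle is only this bookkeeping. One must check that $|\F|\ge4$ makes the denominator $\log_2(|\F|-1)-1$ positive, so that $\ell$ is well defined. One must also note that the resulting bound $2^{-\kappa}$ is negligible in $\kappa$, which the soundness definition requires as a strict inequality. If strictness is demanded, I would remark that any bound of the form $2^{-\kappa}$ is dominated by a negligible function such as $2^{-\kappa+1}$.
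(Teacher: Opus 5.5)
Your proposal is correct and follows the paper's proof essentially step for step: completeness by associativity, and soundness by conditioning on $\pp$ and on the adversary's fixed $\by\neq\bM\bx$, applying \cref{lemma:vMVMD-soundness} to each of the $\ell$ independent repetitions, then using the same two-case parameter analysis (your Case~1 is the paper's Case~2 and vice versa). Your side remark is also right: the choice of $t$ always gives $(1-\delta)^t\le\frac{1}{|\F|-1}$, so the bound $\bigl(\frac{2}{|\F|-1}\bigr)^\ell\le 2^{-\kappa}$ holds in both cases and the case split is not actually needed for this statement.
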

\begin{proof}
Completeness is immediate.
For any $\bM\in\F^{m\times n}$ and $\bx\in\F^n$, let $\by=\bM\bx$.
Then, for every $t$-sparse vector $\be\in\F^N$,
\[
    (\be^\top\bQ)\bx
    =
    \be^\top\bG^\top\bM\bx
    =
    \be^\top\bG^\top\by.
\]

For soundness, fix any unbounded adversary $\adv$, matrix $\bM\in\F^{m\times n}$, and vector $\bx\in\F^n$.
Condition on any public parameters $\pp$ output by \textsc{Preprocess} for which the code generated by $\bG$ has relative distance $\delta$, and on any output $\by \leftarrow \adv(\pp,\bM,\bx)$ such that $\bM\bx\neq\by$.
By \cref{lemma:vMVMD-soundness}, each iteration of \textsc{Verify} accepts with probability at most $(1-\delta)^t+\frac{1}{|\F|-1}$.
The protocol accepts only if all $\ell$ iterations pass.
Since the iterations sample their challenges $\be$ independently, the total acceptance probability is at most $\left((1-\delta)^t+\frac{1}{|\F|-1}\right)^\ell$.

We now consider the two parameter spaces.

\textbf{Case 1:}
Suppose $\frac{1}{|\F|-1}\leq 2^{-(\kappa+1)}$.
The choice of $t$ gives $(1-\delta)^t \leq 2^{-(\kappa+1)}$.
Since $\ell=1$, the soundness error is at most
\[
    (1-\delta)^t+\frac{1}{|\F|-1}
    \leq
    2^{-(\kappa+1)}+2^{-(\kappa+1)}
    =
    2^{-\kappa}.
\]

\textbf{Case 2:}
Suppose $\frac{1}{|\F|-1}>2^{-(\kappa+1)}$.
The choice of $t$ gives $(1-\delta)^t \leq \frac{1}{|\F|-1}$.
Therefore, the soundness error is at most
    $\left(
        \frac{2}{|\F|-1}
    \right)^\ell
    =
    2^{-\ell(\log_2(|\F|-1)-1)}$.
By the choice of $\ell$, then $\ell\bigl(\log_2(|\F|-1)-1\bigr) \geq \kappa$, and hence
    $\left(
        \frac{2}{|\F|-1}
    \right)^\ell
    \leq
    2^{-\kappa}$.

The bound holds for every public parameter $\pp$ satisfying the relative distance condition and every incorrect output chosen by $\adv$. We therefore obtain
\[
    \Pr \left[
        \by \ne \bM \bx \;\land\; b = 1
        ~\middle|~ \begin{aligned}
            &\pp\leftarrow\textsc{Preprocess}(1^\kappa,\bM) \\
            &\by \leftarrow \adv(\pp, \bM,\bx) \\
            & b \leftarrow \textsc{Verify}(\pp,\bx,\by)
        \end{aligned}\right] \leq 2^{-\kappa}.
\]
Thus, $\PivMVMD$ satisfies \cref{def:vMVMD}.
\end{proof}

\begin{remark}[Composing with code failure probability]
\label{remark:pfail-composition}
Suppose $\bG\leftarrow\calG$ is sampled from a code distribution and let $\pfail=\pfail(\calG,N,\delta)$ be the probability that the sampled code has relative distance smaller than $\delta$, then the overall soundness error of $\PivMVMD$ is at most $\pfail+2^{-\kappa}$, which remains negligible when $\pfail$ is negligible. Thus, for code families where $\pfail$ can be made arbitrarily small under practical parameters, the protocol satisfies \cref{def:vMVMD}.
\end{remark}

\subsection{Efficiency}
\label{subsec:verification:efficiency}

\begin{table}[htbp]
\centering
\small
\begin{tabular}{lc}
\toprule
\textbf{Operation} & \textbf{Cost} \\
\midrule
$\bG^\top \by$ & $O(N)$ \\
$\be^\top (\bG^\top \by)=\sum_{j\in \supp(\be)}e_j(\bG^\top \by)_j$ & $2t-1$ \\
$\be^\top \bQ=\sum_{j\in \supp(\be)}e_j \bQ_{j,\ast}$ & $(2t-1)n$  \\
$(\be^\top \bQ)\bx$ & $2n-1$ \\
\midrule
\textbf{$\ell$ repetitions total} & $O\left(N+\ell tn\right)$ \\
\bottomrule
\end{tabular}
\caption{Cost tally for the client online phase of $\PivMVMD$.}
\label{tab:vMVMD-operations}
\end{table}

For optimal efficiency, we instantiate $\PivMVMD$ with a code with linear encoding time, constant rate, and constant distance.
Under the parameters for $\ell$ and $t$ specified in \cref{thm:vMVMD-secure}, $\ell\cdot t=O(\kappa)$, so $\PivMVMD$ has asymptotic time $O_{\kappa}(m + n)$.
The verification cost is summarized in \cref{tab:vMVMD-operations}.
The cost is asymptotically time-optimal, since reading the input $\bx$ and output $\by$ takes $O(m+n)$ time. 

\parhead{Preprocessing time.} The preprocessing phase computes $\bQ=\bG^\top \bM\in\F^{N\times n}$, i.e., $n$ encodings of the columns of $\bM$.
Since the encoding time is linear with linear-encodable codes, $\textsc{Preprocess}$ has running time $O(mn)$.

\subsection{Batch Verification}
\label{sec:verification:batch-verification}
In our LLM application, a single inference produces many \ac{mvm} claims that must eventually be verified. The same setting arises more generally whenever a client verifies computations under a fixed collection of matrices. Suppose the client wishes to verify $\bM_b\bx_b\stackrel{?}{=}\by_b$ for $b\in[B]$, where $\bM_b\in\F^{m\times n}$; claims with different dimensions can be grouped by shape and verified in separate batches.

Recall that preprocessing computes $\bQ_b\coloneqq\bG^\top\bM_b$. Within each verification repetition, the same sparse challenge $\be$ can be shared across all $B$ claims without increasing the soundness error. However, the client must still compute $\bQ_b^\top\be$ for every claim and every repetition, costing $O(B\ell tn)$ field operations, which is usually the bottleneck. Our first observation is that these challenge-dependent projections can themselves be delegated to the server.

For one challenge $\be$, define $\bu_b\coloneqq\bQ_b^\top\be$ and stack all the $B$ computations into the single relation $\widetilde{\bu}=\widetilde{\bQ}\be$ where $\widetilde{\bQ}\coloneqq \left[\bQ_1\;|\;\bQ_2\;|\ldots\;|\bQ_B\right]^\top$.
We ask the server to compute $\widetilde{\bu}$ and verify this auxiliary relation using a second code $\bG'\in\F^{Bn\times N'}$. Preprocessing computes
$\widetilde{\bQ}^{\mathsf{aux}}\coloneqq\widetilde{\bQ}^{\top}\bG'.$
After the server fixes its auxiliary response, the client samples a fresh $t'$-sparse challenge $\be'$, computes $\br\coloneqq\bG'\be'$, and checks $\be^\top\widetilde{\bQ}^{\mathsf{aux}}\be'\stackrel{?}{=}\br^\top\widetilde{\bu}.$ The full protocol repeats this primary check $\ell$ times and uses $\ell'$ independent secondary repetitions.

Several observations make this second verification much more efficient. Unlike the original challenge projection, this auxiliary relation can be checked directly by the client: because both $\be$ and $\be'$ are sparse, the left-hand side accesses only a $t\times t'$ submatrix of $\widetilde{\bQ}^{\mathsf{aux}}$, while the right-hand side is a single inner product of length $Bn$. Thus, no further delegation is needed. Moreover, the full protocol runs $\ell$ independent primary repetitions and $\ell'$ independent secondary repetitions. Within each secondary repetition, the same challenge $\be'$ can be used to check all $\ell$ auxiliary responses, since all of them are fixed before $\be'$ is sampled. Consequently, $\bG'\be'$ is computed only once per secondary repetition, although the resulting scalar check is performed against each primary response. 

The resulting protocol adds one challenge--response round and moves the expensive projection work from the client to the server, reducing the client's online time from $O(BN+B\ell t n)$ naively to
$O\!\left(BN+B\ell(n+t)+\ell't'Bn+\ell\ell'(Bn+tt')\right)$. We give the complete protocol in \cref{prot:batch-vMVMD}. In the following, we analyze this protocol.

We first prove the security of $\Pi_{\textnormal{Batch-vMVMD}}$. The proof follows the ideas in the primary $\PivMVMD$ analysis.

\begin{theorem}[Security of batch verification]
\label{thm:batch-vMVMD-secure}
Suppose $\bG$ and $\bG'$ generate codes of relative distances at
least $\delta$ and $\delta'$, respectively. Let
$\epsilon\coloneqq(1-\delta)^t+\frac{1}{|\F|-1}$ and
$\epsilon'\coloneqq(1-\delta')^{t'}+\frac{1}{|\F|-1}$.
Then $\Pi_{\mathsf{Batch\text{-}vMVMD}}$ has perfect completeness
and soundness error at most
\[
    \epsilon^\ell+(\epsilon')^{\ell'}.
\]
In particular, the parameters can be chosen so that the soundness
error is at most $2^{-\kappa}$.
\end{theorem}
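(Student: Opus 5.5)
Completeness is a direct calculation. Honest responses give $\widetilde{\bu}^{(i)}=\widetilde{\bQ}\be^{(i)}$ for every primary repetition $i\in[\ell]$, so each auxiliary check holds:
\[
\be^{(i)\top}\widetilde{\bQ}^{\mathsf{aux}}\be'=\be^{(i)\top}\widetilde{\bQ}^\top\bG'\be'=(\widetilde{\bQ}\be^{(i)})^\top\br=\br^\top\widetilde{\bu}^{(i)}.
\]
The primary checks compare $\sum_b(\bu_b^{(i)})^\top\bx_b$ with $\be^{(i)\top}\sum_b\bG^\top\by_b$, or the per-claim version of this comparison. They hold by the same identity as in the completeness part of \cref{thm:vMVMD-secure}, because $\bu_b^{(i)}=\bQ_b^\top\be^{(i)}$.

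For soundness, fix an adversary and a false batch, meaning some $b^\ast$ has $\bM_{b^\ast}\bx_{b^\ast}\neq\by_{b^\ast}$. Define two bad events:
\begin{itemize}
\item $E_1$: some primary repetition $i$ has a wrong auxiliary response $\widetilde{\bu}^{(i)}\neq\widetilde{\bQ}\be^{(i)}$, yet all $\ell'$ secondary repetitions accept;
\item $E_2$: all primary responses are correct, yet all $\ell$ primary checks accept.
\end{itemize}
If the verifier accepts a false batch, then either $E_1$ or $E_2$ occurs. By a union bound it suffices to show $\Pr[E_1]\le(\epsilon')^{\ell'}$ and $\Pr[E_2]\le\epsilon^\ell$.

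For $E_2$, the primary checks use exactly the honest projections, so each is the $\PivMVMD$ test applied to the encoded errors. If the claims are checked separately, apply \cref{lemma:vMVMD-soundness} to claim $b^\ast$. If they are combined with shared $\be$, the combined quantity is $\be^\top\sum_b\bG^\top(\bM_b\bx_b-\by_b)$, which is not obviously nonzero. I would handle this case as follows: concatenate the encoded errors into the vector $(\bG^\top\bDelta_b)_b$ and note that the event $\be^\top\bG^\top\bDelta_{b^\ast}=0$ is what controls rejection. If the protocol in fact aggregates the claims, I would instead have the client check per claim $\langle\bu_b,\bx_b\rangle=\be^\top\bG^\top\by_b$ (cost $O(B(n+t))$, matching the stated complexity); then the lemma applies to $b^\ast$ directly. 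The $\ell$ primary challenges are independent, so $\Pr[E_2]\le\epsilon^\ell$.

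For $E_1$, fix the first bad index $i$. The adversary's response $\widetilde{\bu}^{(i)}$ and $\be^{(i)}$ are both determined before $\be'$ is sampled. The secondary check is then exactly \cref{lemma:vMVMD-soundness} applied to the claim $\widetilde{\bQ}\be^{(i)}\stackrel{?}{=}\widetilde{\bu}^{(i)}$, with matrix $\widetilde{\bQ}$, code $\bG'$, and precomputed $\widetilde{\bQ}^{\mathsf{aux}}=\widetilde{\bQ}^\top\bG'$ playing the role of $\bQ^\top$. So each secondary repetition accepts with probability at most $\epsilon'$, and independence over the $\ell'$ fresh challenges gives $(\epsilon')^{\ell'}$. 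Reusing $\be'$ across all $i$ is harmless, since we only need it to catch one fixed bad $i$.

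The main subtlety is the timing and conditioning argument. The adversary's first-round responses may depend adaptively on all $\be^{(i)}$, so I will condition on the entire first-round transcript before applying the lemma to $\be'$; this is valid because $\be'$ is independent of that transcript. Finally, the parameters are chosen as in \cref{thm:vMVMD-secure} so that $\epsilon^\ell\le2^{-(\kappa+1)}$ and $(\epsilon')^{\ell'}\le2^{-(\kappa+1)}$: apply that theorem with $\kappa+1$ in place of $\kappa$. The total soundness error is then at most $2^{-\kappa}$.
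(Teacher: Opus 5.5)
Your proposal is correct and follows the paper's proof. It uses the same two cases: every auxiliary response is correct (apply \cref{lemma:vMVMD-soundness} to a fixed false claim $b^\star$ under the shared challenges, giving $\epsilon^\ell$), or some response $\widetilde{\bu}^{(q^\star)}\neq\widetilde{\bQ}\be^{(q^\star)}$ is wrong (apply the same lemma with code $\bG'$ to that fixed response, giving $(\epsilon')^{\ell'}$). Your explicit bad events, conditioning on the first-round transcript, and $\kappa+1$ parameter choice simply make rigorous what the paper states briefly.

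One caution on your aside: \cref{prot:batch-vMVMD} checks each claim separately, and this is essential. An unweighted aggregated check with a shared $\be$ is genuinely unsound, because a prover can return $\by_1,\by_2$ with cancelling errors $\boldsymbol{\Delta}_1=-\boldsymbol{\Delta}_2\neq\mathbf{0}$. Your ``concatenation'' remark would therefore not rescue that variant.
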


\begin{proof}
Completeness follows directly from
$\bu_b^{(q)}=\bQ_b^\top\be^{(q)}$ and
$\widetilde{\bu}^{(q)}=\widetilde{\bQ}\be^{(q)}$.

For soundness, first suppose every auxiliary response is correct.
Fix any incorrect original claim $b^\star$ and let
$\boldsymbol{\Delta}_{b^\star}
\coloneqq\by_{b^\star}-\bM_{b^\star}\bx_{b^\star}\neq\mathbf 0$.
Acceptance in primary repetition $q$ requires
$(\be^{(q)})^\top\bG^\top\boldsymbol{\Delta}_{b^\star}=0$.
By \cref{lemma:vMVMD-soundness}, this occurs with probability at
most $\epsilon$ per repetition, and hence with probability at most
$\epsilon^\ell$ over all $\ell$ independent repetitions. Sharing
$\be^{(q)}$ across the $B$ claims introduces no factor $B$, since
batch acceptance requires this fixed incorrect claim to accept.

Otherwise, fix any incorrect auxiliary response
$q^\star\in[\ell]$ and define
$\boldsymbol{\Gamma}^{(q^\star)}
\coloneqq\widetilde{\bu}^{(q^\star)}
-\widetilde{\bQ}\be^{(q^\star)}\neq\mathbf 0$.
This vector is fixed before the secondary challenges are sampled.
Acceptance in secondary repetition $j$ requires
$({\be'}^{(j)})^\top\bG'^\top
\boldsymbol{\Gamma}^{(q^\star)}=0$, which again occurs with
probability at most $\epsilon'$ by
\cref{lemma:vMVMD-soundness}. Thus all $\ell'$ secondary
repetitions accept with probability at most
$(\epsilon')^{\ell'}$. Sharing each secondary challenge across the
$\ell$ auxiliary responses introduces no factor $\ell$, since batch
acceptance requires the fixed incorrect response $q^\star$ to
accept.

Combining the two cases gives the claimed bound.
\end{proof}

If either code is sampled from a distribution that may have relative
distance below its target value, the corresponding code-distance
failure probabilities are added to the above bound, as in
\cref{remark:pfail-composition}.

\parhead{Client online time.}
The client first computes $\bz_b\coloneqq\bG^\top\by_b$ for all $b\in[B]$,
costing $O(BN)$ using the primary linear-time encoder. The original
checks then cost $O(B\ell(n+t))$.

For each secondary repetition $j$, since ${\be'}^{(j)}$ is
$t'$-sparse, $\bG'{\be'}^{(j)}
=\sum_{k\in\supp({\be'}^{(j)})}e'_k\bG'_{\ast,k}$ and can therefore
be computed in $O(t'Bn)$ field operations. Because the same
secondary challenge is shared across all $\ell$ primary responses,
this cost is incurred only $\ell'$ times.

For every pair $(q,j)$, computing
$(\be^{(q)})^\top\widetilde{\bQ}^{\mathsf{aux}}{\be'}^{(j)}$
costs $O(tt')$, while
$(\br^{(j)})^\top\widetilde{\bu}^{(q)}$ costs $O(Bn)$.
Hence the total online client time is
\[
O\!\left(
BN+B\ell(n+t)+\ell't'Bn+\ell\ell'(Bn+tt')
\right).
\]
Independent verification instead pays $O(B\ell tn)$ for computing
the challenge-dependent projections $\bQ_b^\top\be^{(q)}$.
Thus, batch verification moves this work to the server. For
constant-distance codes, $t,\ell,t',$ and $\ell'$ depend only on
$\kappa$, so the client time is \(O_\kappa(B(m+n))\).

The optimization adds one challenge--response round. The server
performs $O(B\ell tn)$ additional field operations and returns
$B\ell n$ field elements.

\parhead{Preprocessing and state.}
Batch verification additionally preprocesses
$\widetilde{\bQ}^{\mathsf{aux}}
\coloneqq\widetilde{\bQ}^{\top}\bG'\in\F^{N\times N'}$.
If applying the secondary encoder to a vector in $\F^{Bn}$ costs
$T_{\bG'}(Bn,N')$, this requires
$O(NT_{\bG'}(Bn,N'))$ preprocessing time and $NN'$ field elements
of additional state. For a constant-rate, linear-time encodable
secondary code, $N'=\Theta(Bn)$, so both quantities are
$O(Bmn)$.

The storage can be reduced when $\bG'$ is a systematic code. If
$\bG'=[I_{Bn}\mid\bP]$, then
$\widetilde{\bQ}^{\mathsf{aux}}
=[\widetilde{\bQ}^{\top}\mid\widetilde{\bQ}^{\top}\bP]$.
The first block is already represented by
$\bQ_1,\ldots,\bQ_B$, so only
$\widetilde{\bQ}^{\top}\bP$ needs additional storage. This
optimization reduces storage but does not inherently reduce
preprocessing time, which depends on the cost of applying the
secondary encoder.

The auxiliary preprocessing depends only on the fixed matrices in
the batch and can therefore be reused across future batches
involving the same matrices.

\begin{protocol}[!h]
\begin{mdframed}
\small
\setlist[itemize]{nosep}

\begin{center}
\colorbox{gray!20}{$\Pi_{\textnormal{Batch-vMVMD}}$}
\end{center}

\underline{\textbf{Input parameters}}:
\vspace{0.1em}
\begin{itemize}[leftmargin=*]
    \item Distribution $\calG$ over generator matrices $\bG\in\F^{m\times N}$ of codes with rate $R=\Theta(1)$, relative distance $\delta=\Theta(1)$, and efficient encoding time.
    \item Distribution $\calG'$ over generator matrices $\bG'\in\F^{Bn\times N'}$ of codes with rate $R'\coloneqq Bn/N'$, relative distance $\delta'$, and efficient encoding time.
    \item Primary sparsity and repetition parameters $t$ and $\ell$.
    \item Secondary sparsity and repetition parameters $t'$ and $\ell'$.
    \item Primary block length $N=O(m)$ and secondary block length $N'$.
    \item Field oracle $\F$.
\end{itemize}

\vspace{0.5em}
\underline{\textbf{\textsc{Preprocess}}%
$(1^\kappa,\{\bM_b\}_{b\in[B]})\rightarrow\pp$}:
\vspace{0.1em}
\begin{itemize}[leftmargin=*]
    \item Sample $\bG\leftarrow\calG$ and $\bG'\leftarrow\calG'$.
    \item For every $b\in[B]$, compute $\bQ_b\coloneqq\bG^\top\bM_b\in\F^{N\times n}$.
    \item Form $\widetilde{\bQ}\coloneqq(\bQ_1,\ldots,\bQ_B)^\top\in\F^{Bn\times N}$.
    \item Compute $\widetilde{\bQ}^{\mathsf{aux}}\coloneqq\widetilde{\bQ}^\top\bG'\in\F^{N\times N'}$.
    \item Output $\pp\coloneqq\left(\bG,\bG',\{\bQ_b\}_{b\in[B]},\widetilde{\bQ}^{\mathsf{aux}},t,\ell,t',\ell'\right)$.
    \item Make the public matrices $\{\bQ_b\}_{b\in[B]}$ available to the server.
\end{itemize}

\vspace{0.5em}
\underline{\textbf{\textsc{Verify}}%
$(\pp,\{(\bx_b,\by_b)\}_{b\in[B]})\rightarrow\{0,1\}$}:
\vspace{0.1em}
\begin{itemize}[leftmargin=*]
    \item \textbf{Client:}
    Require that all alleged outputs $\by_1,\ldots,\by_B$ are fixed before sampling any verification challenge.

    \item \textbf{Client:}
    For every $b\in[B]$, compute $\bz_b\coloneqq\bG^\top\by_b\in\F^N$.

    \item \textbf{Client:}
    Sample independent primary challenges $\be^{(1)},\ldots,\be^{(\ell)}\leftarrow\calE_t^N$ and send $\{\be^{(q)}\}_{q\in[\ell]}$ to the server.
    Within each repetition $q$, the same challenge $\be^{(q)}$ is used for all $B$ claims.

    \item \textbf{Server:}
    For every $q\in[\ell]$, compute $\widetilde{\bu}^{(q)}\coloneqq\widetilde{\bQ}\be^{(q)}\in\F^{Bn}$ and return $\{\widetilde{\bu}^{(q)}\}_{q\in[\ell]}$ to the client.

    \item \textbf{Client:}
    After all server responses are fixed, sample independent secondary challenges ${\be'}^{(1)},\ldots,{\be'}^{(\ell')}\leftarrow\calE_{t'}^{N'}$.

    \item \textbf{Client:}
    For every $j\in[\ell']$, compute $\br^{(j)}\coloneqq\bG'{\be'}^{(j)}\in\F^{Bn}$.

    \item \textbf{Client:}
    For every $q\in[\ell]$ and $j\in[\ell']$, check $(\be^{(q)})^\top\widetilde{\bQ}^{\mathsf{aux}}{\be'}^{(j)}\stackrel{?}{=}(\br^{(j)})^\top\widetilde{\bu}^{(q)}$.
    If any secondary check fails, output $0$ (reject) and halt.

    \item \textbf{Client:}
    For every $q\in[\ell]$, parse $\widetilde{\bu}^{(q)}\coloneqq((\bu_1^{(q)})^\top,\ldots,(\bu_B^{(q)})^\top)^\top$, where $\bu_b^{(q)}\in\F^n$.

    \item \textbf{Client:}
    For every $b\in[B]$ and $q\in[\ell]$, check $(\bu_b^{(q)})^\top\bx_b\stackrel{?}{=}(\be^{(q)})^\top\bz_b$.
    If any original check fails, output $0$ (reject) and halt.
    If all checks pass, output $1$ (accept).
\end{itemize}

\end{mdframed}

\caption{Batch verification for multiple matrices and vectors.
The client delegates the stacked matrix-dependent challenge projection to the server and verifies the returned vector using a second coded check.}
\label{prot:batch-vMVMD}
\end{protocol}

\section{Private and Verifiable Matrix--Vector Multiplication Delegation (pvMVMD)}
\label{sec:delegation}

In this section, we present a construction for \ac{pvMVMD} by combining $\PivMVMD$ in \cref{sec:verification} for verification with dual-LPN-friendly linear codes for efficient privacy.
We introduce our construction in \cref{subsec:delegation:construction}, analyze its security in \cref{subsec:delegation:security}, and its costs in \cref{subsec:delegation:efficiency}.

\subsection{Our Construction}
\label{subsec:delegation:construction}
To encrypt a vector $\bx\in\F^n$, the client masks $\bx$ with a vector $\bx'$ to form the ciphertext $\widehat{\bx}=\bx+\bx'\in\F^n$. 
The client sends $\widehat{\bx}$ to the server, which returns $\widehat{\by}=\bM\widehat{\bx}$.
The client verifies the correctness of $\widehat{\by}$ by running a \ac{vMVMD} protocol over the ciphertexts.
The client decrypts $\widehat{\by}$ by computing $\by=\widehat{\by}-\bM\bx'\in\F^m$.
However, if $\bx'$ is sampled uniformly at random, the time required for the client to compute $\bM\bx'$ matches the time for the client to locally compute $\by=\bM\bx$, rendering it useless to outsource this computation.
To solve the efficiency problem, we prepare $\bx'$ pseudorandomly such that $\bM\bx'$ can be computed in time $O(m+n)$.
We make use of coding theory to construct such pseudorandom vectors.

Our construction for privacy follows the idea of prior works~\cite{abbaszadeh2025single,EMVP25,braverman2025practical}.
We use a separate distribution of linear codes $\privcalG$ for privacy, distinct from the code used by $\PivMVMD$ in \cref{sec:verification}.
Let $\privG\in\F^{n\times\privN}$ denote a sampled generator matrix, where $\privN$ is the block length of the privacy code.
Under the $(\privcalG,\calE_{\privt}^{\privN},\F)$-dual-LPN assumption, if $\prive\in\F^{\privN}$ is $\privt$-sparse then $\bx'\coloneqq\privG\prive$ is a pseudorandom vector. We mask $\bx$ by computing $\widehat{\bx}=\bx+\bx'$. By first computing $\bP\coloneqq\bM\privG\in\F^{m\times\privN}$ in a preprocessing phase, we can remove the mask by computing $\bM\bx'=\bM(\privG\prive)=(\bM\privG)\prive=\bP\prive$.
Since $\prive$ is sparse, computing $\bP\prive$ requires combining only the columns of $\bP$ corresponding to the nonzero entries of $\prive$.

The full protocol $\PipvMVMD$ is presented in \cref{prot:pvMVMD}.

The client privacy property can be strengthened to additionally hide a private matrix $\bM$ owned by the client. While such a protocol enables greater privacy, our construction requires a private client-run preprocessing, rendering this most useful when the preprocessing can be amortized across many queries.
For further details on our construction, see~\cref{app:private-matrix}.

\begin{protocol}[!t]
\begin{mdframed}
\small
\setlist[itemize]{nosep}
\begin{center}
\colorbox{gray!20}{$\PipvMVMD$}
\end{center}
\underline{\textbf{Input parameters}}:
\vspace{0.1em}
\begin{itemize}[leftmargin=*]
    \item Distribution $\privcalG$ over generator matrices $\privG \in \F^{n \times \privN}$ of codes with rate $\privR=\Theta(1)$ and relative distance $\privdelta=\Theta(1)$, linear encoding time, and satisfying the dual-LPN assumption for error distribution $\calE_{\privt}^{\privN}.$
    \item Distribution $\vercalG$ over generator matrices $\verG \in \F^{m \times \verN}$ of codes with rate $\verR=\Theta(1)$, relative distance $\verdelta=\Theta(1)$, and linear encoding time.
    \item Sparsity parameters $\privt=\privt(\lambda)$ and $\verit=\verit(\kappa)$ and repetition parameter $\verl=\verl(\kappa)$.
    \item Block lengths $\privN=O(n)$ and $\verN=O(m).$
    \item Field oracle $\F$.
\end{itemize}

\vspace{0.5em}
\underline{\textbf{\textsc{Preprocess}}$(1^\lambda, 1^\kappa, \bM) \to \pp$}:
\vspace{0.1em}
\begin{itemize}[leftmargin=*]
    \item Sample $\privG\leftarrow\privcalG$ and $\verG\leftarrow\vercalG$.
    \item Compute $\bP \coloneqq \bM\privG \in \F^{m \times \privN}$.
    \item Compute $\bQ \coloneqq \verG^\top \bM \in \F^{\verN \times n}$.
    \item Output $\pp = (\privG, \verG, \bP, \bQ, \privt, \verit,\verl)$.
\end{itemize}

\vspace{0.5em}
\underline{\textbf{\textsc{Encrypt}}$(\pp, \bx) \to (\widehat{\bx}, \st)$}:
\vspace{0.1em}
\begin{itemize}[leftmargin=*]
    \item Sample $\prive \leftarrow \calE_{\privt}^{{\privN} }$ and compute $\bx' \coloneqq \privG\prive \in \F^n$.
    \item Output $\widehat{\bx} \coloneqq \bx + \bx'$ and $\st \coloneqq \prive$.
\end{itemize}

\vspace{0.5em}
\underline{\textbf{\textsc{Verify}}$(\pp, \widehat{\bx}, \widehat{\by}) \to \{0,1\}$}:
\vspace{0.1em}

\begin{itemize}[leftmargin=*]
    \item Compute $\verG^\top \widehat{\by}\in\F^{\verN}$
    \item Repeat $\verl$ times:
    \begin{itemize}[leftmargin=*]
        \item Sample $\vere\leftarrow \calE_{\verit}^{\verN}$.
        \item Compute $v_L\coloneqq(\vere^\top \bQ)\widehat{\bx}$ and $v_R\coloneqq\vere^\top (\verG^\top \widehat{\by})$
        \item If $v_L\neq v_R$, output 0 (reject) and halt.
    \end{itemize}
    \item If all $\verl$ checks pass, output 1 (accept).
\end{itemize}

\vspace{0.5em}
\underline{\textbf{\textsc{Decrypt}}$(\pp, \st, \widehat{\by}) \to \by$}:
\vspace{0.1em}
\begin{itemize}[leftmargin=*]
    \item Output $\by \coloneqq \widehat{\by} - \bP\prive \in \F^m$.
\end{itemize}
\end{mdframed}
\caption{Private and verifiable matrix--vector multiplication
delegation protocol}
\label{prot:pvMVMD}
\end{protocol}

\subsection{Security}
\label{subsec:delegation:security}
We now demonstrate that $\PipvMVMD$ satisfies completeness, soundness, and client privacy according to \cref{def:pvMVMD}.
For verification, we use the same parameters as in \cref{sec:verification}.
For privacy, the security analysis of prior works~\cite{expand-accumulate-codes} demonstrates that a good minimum distance suffices for the dual-LPN assumption against known attacks under certain conditions (see~\cref{app:lpn}), and they provide the following heuristic for selecting $\privt$ as a function of $\lambda, \privN$ and $\privdelta$:
\begin{equation}\label{eq:linear-tests}\privt\geq \frac{(\ln 2)(\lambda -\log_2 (\privN))}{\privdelta}.
\end{equation}
We therefore choose an asymptotically good code for $\privG$, with constant rate $\privR=\Theta(1)$, $\privN=n/\privR=O(n)$, and good relative distance $\privdelta=\Theta(1)$, guaranteeing $\privt=O_\lambda(1)$ under this heuristic. We show that $\PipvMVMD$ satisfies \cref{def:pvMVMD} in \cref{thm:pvMVMD-secure}.

\begin{theorem}
\label{thm:pvMVMD-secure}
Suppose that $\privcalG$ is a distribution over codes with relative distance at least $\privdelta$ satisfying the $(\privcalG,\calE_{\privt}^{\privN},\F)$-dual-LPN assumption for polynomially many independent samples, and that the code generated by $\verG$ has relative distance $\verdelta$.
Then $\PipvMVMD$ satisfies completeness, soundness, and client privacy as defined in \cref{def:pvMVMD} when instantiated with $|\F|\geq 4$, $\verl
    =
    \left\lceil
        \frac{\kappa}
             {\log_2(|\F|-1)-1}
    \right\rceil$, $\verit
    =
    \left\lceil
        \frac{
            \max\left(\log_2(|\F|-1),\kappa+1\right)
        }{
            -\log_2(1-\verdelta)
        }
    \right\rceil$.
\end{theorem}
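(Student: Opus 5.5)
The plan is to treat the three properties separately, reducing soundness to \cref{thm:vMVMD-secure} and privacy to the dual-LPN assumption.

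\emph{Completeness.} If the server is honest, then $\widehat{\by}=\bM\widehat{\bx}$. The \textsc{Verify} step of $\PipvMVMD$ is exactly $\PivMVMD.\textsc{Verify}$ run on the pair $(\widehat{\bx},\widehat{\by})$ with $\bQ=\verG^\top\bM$. Hence it accepts with probability $1$ by the completeness part of \cref{thm:vMVMD-secure}. For decryption,
\[
\widehat{\by}-\bP\prive=\bM\bx+\bM\privG\prive-\bM\privG\prive=\bM\bx,
\]
so the output equals $\by=\bM\bx$ with certainty.

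\emph{Soundness.} The first step is to observe that decryption is an affine bijection. Since $\by=\widehat{\by}-\bP\prive$ and $\bM\bx=\bM\widehat{\bx}-\bP\prive$, we get $\by\neq\bM\bx$ if and only if $\widehat{\by}\neq\bM\widehat{\bx}$. The bad event is therefore $\{\widehat{\by}\neq\bM\widehat{\bx}\land b=1\}$.

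Next I condition on the preprocessing output $\pp$ and on the sampled $\prive$. This fixes $\widehat{\bx}$, and with it any $\widehat{\by}$ the unbounded adversary outputs. The challenges $\vere$ drawn inside \textsc{Verify} are fresh and independent of everything fixed so far. So the argument from the soundness proof of \cref{thm:vMVMD-secure} applies verbatim with $(\widehat{\bx},\widehat{\by})$ in place of $(\bx,\by)$, using \cref{lemma:vMVMD-soundness} and the two-case parameter analysis with the stated $\verl,\verit$ and $|\F|\ge 4$. This gives acceptance probability at most $2^{-\kappa}$ for every fixed choice, and averaging over the conditioning preserves the bound.

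The one subtlety is that \cref{def:vMVMD} quantifies over a fixed input $\bx$ chosen before $\pp$. Here, by contrast, $\widehat{\bx}$ depends on $\prive$. This is harmless, because the per-instance bound from \cref{lemma:vMVMD-soundness} holds for every $(\widehat{\bx},\widehat{\by})$ once $\verG$ has distance $\verdelta$. I will state the argument at that per-instance level rather than cite the definition-level statement.

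\emph{Client privacy.} The simulator samples $\pp\leftarrow\textsc{Preprocess}(1^\lambda,1^\kappa,\bM)$ and a uniform $\bu\leftarrow\F^n$, and outputs $(\pp,\bM,\bu)$. I then use a single hybrid. The real view is $(\pp,\bM,\bx+\privG\prive)$. Note that $\pp$ consists of $\privG$, $\verG$, $\bP=\bM\privG$, $\bQ=\verG^\top\bM$ and public parameters. Everything except $\privG$ is either independently sampled or an efficiently computable function of $\privG$ and the public $\bM$.

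The reduction is as follows. Given a dual-LPN challenge $(\privG,\bw)$, the reduction, which knows $\bM$ and $\bx$ as non-uniform advice, samples $\verG$ itself, computes $\bP$ and $\bQ$, and outputs $(\pp,\bM,\bx+\bw)$. If $\bw=\privG\prive$, this is exactly the real view. If $\bw$ is uniform, then $\bx+\bw$ is uniform, so the output is exactly the simulated view. Any distinguisher with non-negligible advantage therefore breaks the $(\privcalG,\calE_{\privt}^{\privN},\F)$-dual-LPN assumption. The multi-sample form of the assumption covers repeated queries, consistent with the multi-query remark.

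\emph{Main obstacle.} The step needing the most care is the privacy reduction. Its correctness depends on the view containing no information about $\prive$ beyond $\widehat{\bx}$. In particular, the server must not see the verification outcome in a way that correlates with $\prive$. Since \textsc{Verify} uses only public data and $(\widehat{\bx},\widehat{\by})$, it does not. I must also check that $\pp$ is simulatable from $\privG$ alone, which it is, as shown above. The remaining steps are direct reuse of earlier results.
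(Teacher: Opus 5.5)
Your proposal is correct and follows essentially the same route as the paper. Completeness is by direct computation. Soundness uses the identity $\by-\bM\bx=\widehat{\by}-\bM\widehat{\bx}$ together with the per-instance bound behind \cref{thm:vMVMD-secure}, after conditioning on $\pp$ and $\prive$. Privacy is a single dual-LPN hybrid in which $\pp$ is rebuilt from the challenge $\privG$. Your explicit reduction and your note that $\widehat{\bx}$ depends on $\pp$ just spell out steps the paper states more tersely.
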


\begin{proof}
Completeness is immediate.
For any $\bM\in\F^{m\times n}$ and $\bx\in\F^n$, let $\prive\leftarrow\calE_{\privt}^{\privN}$, $\bx':=\privG\prive$, and $\widehat{\bx}:=\bx+\bx'$.
If the server evaluates honestly, then $\widehat{\by}=\bM\widehat{\bx}$.
By completeness of the verification protocol, \textsc{Verify} accepts with probability 1.
Moreover, since $\bP=\bM\privG$, decryption gives
\begin{align*}
    \by
    &=
    \widehat{\by}-\bP\prive
    =
    \bM\widehat{\bx}-\bM\privG\prive\\
    &=
    \bM(\bx+\privG\prive)-\bM\privG\prive
    =
    \bM\bx.
\end{align*}

For soundness, fix any unbounded adversary $\adv$, matrix $\bM\in\F^{m\times n}$, and input $\bx\in\F^n$.
Condition on any public parameters $\pp$ output by \textsc{Preprocess} for which the code generated by $\verG$ has relative distance $\verdelta$, and on any fresh privacy randomness $\prive$ used to form $\widehat{\bx}=\bx+\privG\prive$.
Let $\widehat{\by}\leftarrow\adv(\pp,\bM,\widehat{\bx})$ be the server's response, and suppose that verification accepts.
The decrypted output satisfies $\by=\widehat{\by}-\bP\prive$.
Using $\bP=\bM\privG$, we have
\begin{align*}
    \by-\bM\bx
    &=
    \widehat{\by}
    -
    \bM\privG\prive
    -
    \bM\bx\\
    &=
    \widehat{\by}
    -
    \bM(\bx+\privG\prive)\\
    &=
    \widehat{\by}
    -
    \bM\widehat{\bx}.
\end{align*}
Therefore, if $\by\neq\bM\bx$, then $\widehat{\by}\neq\bM\widehat{\bx}$.
By \cref{thm:vMVMD-secure}, under the stated choices of $\verit$ and $\verl$, \textsc{Verify} accepts such an incorrect response with probability at most $2^{-\kappa}$, proving soundness.

For client privacy, the simulator $\mathsf{Sim}$ on input $(1^\lambda, 1^\kappa, \bM)$ runs $\pp \leftarrow \textsc{Preprocess}(1^\lambda, 1^\kappa, \bM)$, samples $\widetilde{\bx} \leftarrow \F^n$ uniformly at random, and outputs $(\pp, \bM, \widetilde{\bx})$.
By the $(\privcalG,\calE_{\privt}^{\privN},\F)$-dual-LPN assumption, $(\privG,\privG\prive)\approx_c(\privG,\bu)$, where $\privG\leftarrow\privcalG$, $\prive\leftarrow\calE_{\privt}^{\privN}$, and $\bu\leftarrow\F^n$.
Consequently, for any $\bx\in\F^n$, $(\pp,\bM,\bx+\privG\prive)\approx_c(\pp,\bM,\bx+\bu)$.
Because $\bu$ is uniformly distributed over $\F^n$, the vector $\bx+\bu$ is also identically distributed to a uniform vector $\widetilde{\bx}\leftarrow\F^n$. Hence,
\begin{align*}
    \mathsf{View}_\textnormal{Server}^{\Pi}(\bM,\bx)
    &=
    (\pp, \bM, \widehat{\bx})\\
    \approx_c
    (\pp, \bM, \widetilde{\bx})
    &=
    \mathsf{Sim}(1^\lambda, 1^\kappa, \bM).\qedhere
\end{align*}
\end{proof}

\begin{remark}[Code failure probability in $\PipvMVMD$]
As in \cref{remark:pfail-composition}, when the verification code $\verG$ is sampled from a distribution with failure probability $\pfail$, the overall soundness error is at most $\pfail+2^{-\kappa}$. Similarly, the privacy guarantee relies on the dual-LPN assumption, which requires that $\privG$ generates a code with sufficient minimum distance. Both failure probabilities can be made negligible for suitable code families that we consider for instantiation.
\end{remark}

\subsection{Efficiency}\label{subsec:delegation:efficiency}
We now analyze the online runtime of the client for $\PipvMVMD$ under the specified parameters above.

In \cref{tab:pvMVMD-operations}, we break down the runtime of each operation performed by the client in $\textsc{Encrypt}$ and $\textsc{Decrypt}$. 
\begin{table}[htbp]
\centering
\resizebox{0.97\columnwidth}{!}{\begin{tabular}{llc}
\toprule
\textbf{Phase} & \textbf{Operation} & \textbf{Cost} \\
\midrule
\textsc{Encrypt} & $\bx'\coloneqq\privG\prive$ & $O(\privN)$ \\
& $\widehat{\bx} \coloneqq \bx + \bx'$ & $n$ \\
\midrule
\textsc{Decrypt} & $\bP\prive = \sum_{j\in \supp(\prive)} e_{\bx,j} \bP_{\ast,j}$ & $(2\privt-1)m$ \\
& $\by \coloneqq \widehat{\by} - \bP\prive$ & $m$ \\
\midrule
\textsc{Verify} & See \cref{tab:vMVMD-operations} & $O\left(\verN+\verl\verit  n\right)$\\
\midrule
\textbf{Total} & -- & $O\left(\privN+\verN+n\verl\verit + m\privt\right)$ \\
\bottomrule
\end{tabular}}
\caption{Cost tally for the client online phase of $\PipvMVMD$.}
\label{tab:pvMVMD-operations}
\end{table}

Since we use the $\textsc{Verify}$ algorithm of $\PivMVMD$ (\cref{subsec:verification:construction}), $\textsc{Verify}$ runs in time $O_\kappa\left(m+n\right)$.
$\textsc{Encrypt}$ and $\textsc{Decrypt}$ have running time $O_\lambda\left(m+n\right)$, which guarantees that the total client online time of $\PipvMVMD$ is $O_{\lambda,\kappa}\left(m+n\right).$

\parhead{Preprocessing time.} The preprocessed matrix $\bP=\bM\privG$ can be computed by realizing $\bP^\top=\privG^\top \bM^\top$ as $m$ encodings of the rows of $\bM$.
Similarly, $\bQ=\verG^\top \bM$ can be realized as $n$ encodings of the columns of $\bM$.
When $\privG$ and $\verG$ have linear encoding times, $\textsc{Preprocess}$ has running time $O(mn).$

\section{Evaluations}
\label{sec:evaluations}

We implement $\PipvMVMD$ and an end-to-end prototype of \protname.
In this section, we report on the implementation details and the evaluation results.

\subsection{Evaluation of \ac{pvMVMD}}
\label{subsec:implementation-matvec}

\parhead{Instantiation of codes.}
We implement $\PipvMVMD$ (\cref{prot:pvMVMD}) in Rust.
We use the BabyBear field $\F_q$, where $q = 15\cdot 2^{27}+1$, and set the computational and statistical security parameters to $\lambda=128$ and $\kappa=40$, respectively.

We instantiate the privacy and verification codes using the RAA code construction of Akhiani et al.~\cite{distanceRAAcodes} (also see \cref{app:RAA-codes}), with rate $\privR=\verR=\frac{1}{8}$ and target relative distance $\privdelta=\verdelta=\frac{1}{2}$. We use RAA codes because they are linear-time encodable, field-agnostic, asymptotically good, and plausibly dual-LPN-friendly. For further discussion on these properties for RAA codes, see~\cref{app:RAA-codes}.

We choose sparsity $\privt$ according to \eqref{eq:linear-tests}: for $n\in\left\{2^{10},2^{11}, 2^{12}, 2^{13}, 2^{14}\right\}$ this gives $\privt \in \left\{  160,159,157,156, 154 \right\}$,
respectively. We choose sparsity parameter $\verit = 31$ and number of repetitions $\verl = 2$ per \cref{thm:pvMVMD-secure}.

\parhead{Experimental setup}
We evaluate our \ac{pvMVMD} implementation on an AWS \texttt{c8i.4xlarge} instance with \SI{32}{\giga\byte} of memory, using a single thread.
Unless otherwise noted, we measure computation time and exclude network communication.
For simplicity, our experiments focus on square matrices where $m=n$.

\parhead{Online Time}
The \textit{online runtime} refers to the runtime after preprocessing.
For clients, this includes \textsc{Encrypt}, \textsc{Verify}, and \textsc{Decrypt}.
\textbf{Client (total)} reports the total online time, broken down into \textbf{Client (verification)} for $\PipvMVMD.\textsc{Verify}$, and \textbf{Client (privacy)} for $\PipvMVMD.\textsc{Encrypt}$ and $\PipvMVMD.\textsc{Decrypt}$.
For servers, the online computation is the \ac{mvm} on the masked input.
The results are shown in \cref{tab:matvec-online-breakdown}.

\begin{table}[t]
\centering
\resizebox{\columnwidth}{!}{
\begin{tabular}{rccccc}
\toprule
\multirow{2}{*}{$\boldsymbol{n}$}
& \multirow{2}{*}{\textbf{Server}}
& \multicolumn{3}{c}{\textbf{Client}}
& \multirow{2}{*}{\textbf{Client speedup}}
\\
\cmidrule(lr){3-5}
& & \textbf{Total} & \textbf{Verification} & \textbf{Privacy} &
\\
\midrule
$2^{8}$
& $0.14$
& $0.19$
& $0.09$
& $0.09$
& $0.77\times$
\\
$2^{9}$
& $0.30$
& $0.32$
& $0.18$
& $0.13$
& $0.95\times$
\\
$2^{10}$
& $0.39$
& $0.79$
& $0.50$
& $0.29$
& $0.50\times$
\\
$2^{11}$
& $1.59$
& $1.55$
& $1.09$
& $0.46$
& $1.02\times$
\\
$2^{12}$
& $9.54$
& $2.33$
& $1.47$
& $0.85$
& $4.10\times$
\\
$2^{13}$
& $34.68$
& $4.56$
& $2.51$
& $2.05$
& $7.60\times$
\\
$2^{14}$
& $116.00$
& $9.33$
& $4.65$
& $4.68$
& $12.44\times$
\\
\bottomrule
\end{tabular}
}
\caption{Online server and client runtime in $\PipvMVMD$, in milliseconds.
The server runtime is the time required to perform the \ac{mvm} on the masked vector.
The client runtime is divided into verification-related and privacy-related.
Client speedup is the server \ac{mvm} time divided by the total client time.}
\label{tab:matvec-online-breakdown}
\end{table}

For small matrices, the client online time is worse than the server time (i.e., outsourcing is not worthwhile).
However, the advantage becomes significant as matrix dimension grows. 
At $n=2^{13}$, the total client online time is \SI{4.56}{\milli\second}, consisting of \SI{2.51}{\milli\second} for verification and \SI{2.05}{\milli\second} for privacy, while the server multiplication takes \SI{34.68}{\milli\second}.
Thus, the client's online work is approximately $7.6\times$ smaller than the delegated multiplication.
At $n=2^{14}$, this gap increases to approximately $12.4\times$, consistent with linear client and quadratic server costs.

\parhead{Preprocessing time}
\Cref{tab:matvec-preprocessing-breakdown} presents the preprocessing overhead.
Recall that preprocessing involves computing $\bP\coloneqq\bM\privG$ and $\bQ\coloneqq\verG^\top\bM$ where $\bM$ is the public matrix and $\privG,\verG$ are sampled from code generator distributions.
The computation overhead is essentially the same as encoding $\bM$.
As expected, preprocessing grows approximately quadratically with the matrix dimension because both $\bP$ and $\bQ$ require processing the complete matrix.
At $n=2^{14}$, preprocessing takes approximately \SI{135.80}{\second}, including \SI{46.57}{\second} for privacy-related preprocessing and \SI{89.23}{\second} for verification-related preprocessing.

\begin{table}[t]
\centering
\small
\resizebox{\columnwidth}{!}
{
\begin{tabular}{rccc}
\toprule
$\boldsymbol{n}$
&
\shortstack{\textbf{Preprocessing}\\\textbf{(total)}}
&
\shortstack{\textbf{Preprocessing}\\\textbf{(verification)}}
&
\shortstack{\textbf{Preprocessing}\\\textbf{(privacy)}}
\\
\midrule
$2^{8}$
& $18.76$
& $10.72$
& $8.04$
\\
$2^{9}$
& $69.04$
& $39.08$
& $29.97$
\\
$2^{10}$
& $329.54$
& $194.84$
& $134.69$
\\
$2^{11}$
& $1{,}499.06$
& $945.06$
& $554.00$
\\
$2^{12}$
& $6{,}030.76$
& $3{,}928.78$
& $2{,}101.98$
\\
$2^{13}$
& $25{,}063.59$
& $16{,}245.40$
& $8{,}818.19$
\\
$2^{14}$
& $135{,}800.06$
& $89{,}233.69$
& $46{,}566.37$
\\
\bottomrule
\end{tabular}
}
\caption{One-time preprocessing runtime for square matrices in milliseconds.}
\label{tab:matvec-preprocessing-breakdown}
\end{table}

\parhead{Comparison with related work}
We compare our protocol with Dumas--Zucca~\cite{DumasZucca17} and Sum-Check instantiated with the BaseFold polynomial commitment scheme~\cite{thaler13,basefold}, where Dumas--Zucca is a specialized protocol for verifying matrix--vector products and Sum-Check+BaseFold represents the general proof-based approach used by recent verifiable-ML systems, including DeepProve~\cite{deepprove}.
Since both schemes provide verification only, we compare their runtime with our client verification time.
To make a fair comparison, we benchmark over the scalar field of BLS12-381.

Across the evaluated matrix dimensions, our end-to-end time is between $2.5\times$ and $34.8\times$ faster than Dumas--Zucca and between $60.7\times$ and $194.8\times$ faster than Sum-Check+BaseFold.
At $n=4{,}096$, for example, our protocol completes in \SI{421.943}{\milli\second}, compared with \SI{1.056}{\second} for Dumas--Zucca and \SI{82.196}{\second} for Sum-Check+BaseFold.
Our client verification time is also comparable to theirs.
The performance numbers are detailed in \cref{tab:matvec-related-work}.

\begin{table*}[t]
\centering
\small
\setlength{\tabcolsep}{3.5pt}
\renewcommand{\arraystretch}{1.08}
{
\begin{tabular}{rlrrrrrr}
\toprule
$\boldsymbol{n}$
&
\textbf{Scheme}
&
\shortstack{\textbf{Preprocessing}}
&
\shortstack{\textbf{Server}\\\textbf{multiplication}}
&
\shortstack{\textbf{Server}\\\textbf{proof}}
&
\shortstack{\textbf{Server}\\\textbf{total}}
&
\shortstack{\textbf{Client}\\\textbf{online}}
&
\shortstack{\textbf{Client}\\\textbf{verification}}
\\
\midrule
$2^{7}$
& \textbf{Ours}
& $23.16$
& $0.39$
& --
& $0.39$
& $0.88$
& $0.23$
\\
& Dumas--Zucca
& $70.24$
& $0.35$
& $20.82$
& $21.17$
& $23.14$
& $23.14$
\\
& Sum-Check+BaseFold
& $101.80$
& $0.36$
& $74.92$
& $75.29$
& $1.99$
& $1.99$
\\
\addlinespace[0.3em]
$2^{8}$
& \textbf{Ours}
& $85.22$
& $1.51$
& --
& $1.51$
& $1.81$
& $0.49$
\\
& Dumas--Zucca
& $114.19$
& $1.40$
& $39.80$
& $41.19$
& $28.30$
& $28.30$
\\
& Sum-Check+BaseFold
& $428.44$
& $1.47$
& $300.14$
& $301.61$
& $2.48$
& $2.48$
\\
\addlinespace[0.3em]
$2^{9}$
& \textbf{Ours}
& $316.83$
& $6.12$
& --
& $6.12$
& $3.89$
& $1.16$
\\
& Dumas--Zucca
& $195.50$
& $5.59$
& $75.82$
& $81.41$
& $35.88$
& $35.88$
\\
& Sum-Check+BaseFold
& $1{,}843.15$
& $6.58$
& $1{,}229.57$
& $1{,}236.15$
& $3.29$
& $3.29$
\\
\addlinespace[0.3em]
$2^{10}$
& \textbf{Ours}
& $1{,}275.83$
& $24.08$
& --
& $24.08$
& $7.14$
& $1.64$
\\
& Dumas--Zucca
& $351.70$
& $22.37$
& $160.90$
& $183.27$
& $48.32$
& $48.32$
\\
& Sum-Check+BaseFold
& $8{,}069.29$
& $26.80$
& $5{,}051.12$
& $5{,}077.93$
& $4.14$
& $4.14$
\\
\addlinespace[0.3em]
$2^{11}$
& \textbf{Ours}
& $5{,}890.08$
& $97.23$
& --
& $97.23$
& $14.74$
& $3.20$
\\
& Dumas--Zucca
& $685.06$
& $89.44$
& $308.67$
& $398.11$
& $60.67$
& $60.67$
\\
& Sum-Check+BaseFold
& $34{,}915.39$
& $107.84$
& $20{,}521.54$
& $20{,}629.39$
& $5.07$
& $5.07$
\\
\addlinespace[0.3em]
$2^{12}$
& \textbf{Ours}
& $27{,}339.81$
& $392.26$
& --
& $392.26$
& $29.68$
& $6.76$
\\
& Dumas--Zucca
& $1{,}453.98$
& $359.55$
& $616.93$
& $976.47$
& $79.30$
& $79.30$
\\
& Sum-Check+BaseFold
& $150{,}894.69$
& $433.04$
& $81{,}756.33$
& $82{,}189.37$
& $6.39$
& $6.39$
\\
\bottomrule
\end{tabular}
}
\caption{Comparison with verifiable matrix--vector multiplication baselines.
All times are in milliseconds.
Our client online time includes both privacy and verification, where client verification reports only the verification component.}
\label{tab:matvec-related-work}
\end{table*}

\subsection{Design and Evaluation of \protname}
\label{subsec:implementation-ml}
In this section, we present the design of \protname and the evaluation results using the open-source model Qwen3-4B.

\subsubsection{Design of \protname}

We use the term inference execution to refer to the process of converting a sequence of input tokens into the logits for the output tokens. 
An inference execution involves both linear and nonlinear operations~\cite{vaswani2017attention}.
Clients in \protname use \ac{pvMVMD} to outsource the \acp{mvm} in the linear operations while computing nonlinear operations locally.

As shown in~\cref{fig:private-ml}, we denote an inference execution as a sequence of $T$ operations $((\bM_1,f_1),\ldots,(\bM_T,f_T))$, where $\bM_j$ is the matrix for the linear operation, and $f_j$ is the nonlinear function.
In each step $j\in [T]$, the client outsources the computation of $\by_j =\bM_j \bx_j$ to the server, where $\bx_j$ is a private vector computed from step $j-1$ (with $\bx_1$ being the client's initial input);
then the client runs the nonlinear function locally to compute $\bx_{j+1} = f_j(\by_j)$, which is the input to the next step. 
$\bx_{T+1}$ is the final output, the logits of the next token. 

To leverage batch verification, we defer verification of all server response pairs $(\widehat{\bx}_j,\widehat{\by}_j)$ to the end, a strategy we call \emph{optimistic verification}.
The client accepts the logits only if every check accepts; otherwise, it discards the execution.

\begin{figure}[!t]
    \centering
    \includegraphics[width=\linewidth]{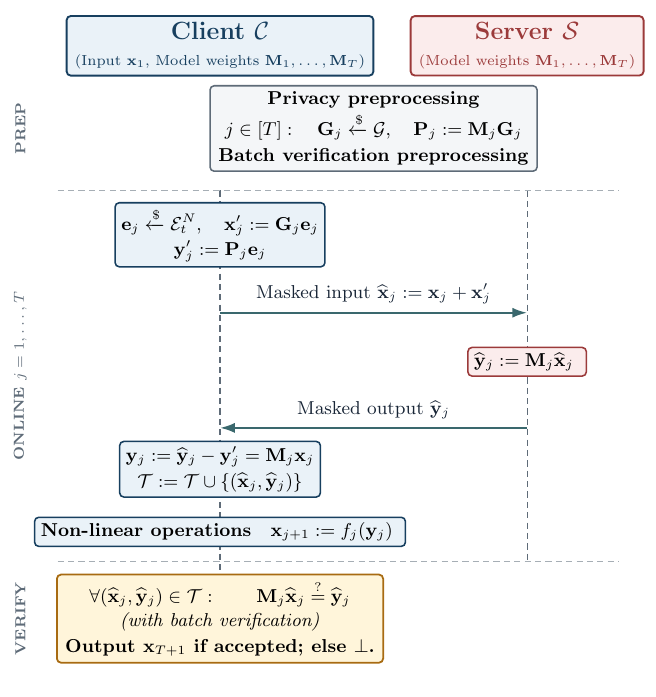}
    \caption{Private and verifiable LLM inference using \protname.
    The client locally evaluates the nonlinear operations and delegates the linear operations $\bM_1,\ldots,\bM_T$ to the server.
    The output is the logits of the next token.}
    \label{fig:private-ml}
\end{figure}

\parhead{Boost mode}
Generating input masks and computing the corresponding output masks are a significant part of client overhead. 
Since these masks are independent of the model or the client input, they can be generated beforehand (e.g., when the client's device is idle).
This optimization is similar to how MPC protocols generate Beaver Triples ahead of time to accelerate the online phase.
We call this the \emph{boost mode} of \protname.
We refer to the original workflow where the client generates these masks during inference as the \emph{standard mode}; we implement \textit{pipelining} so the masks for step $j+1$ are generated while waiting for the server to complete step $j$. 

\parhead{Verification-only mode}
Some users may not always require input privacy, for example when their prompts contain no sensitive information or when they are willing to share them with the service provider, but may still want assurance that the provider performs the inference correctly.
In this case, the protocol can be simplified and, importantly, the interaction with the server can be removed.
The client first sends the model input to the server.
The server evaluates the complete model, including the nonlinear operations, and then sends the output vectors $\by_1,\ldots,\by_T$ back to the client.
Starting from the original input, the client replays the nonlinear operations locally to reconstruct each input $\bx_j$ and verifies $\bM_j\bx_j\stackrel{?}{=}\by_j$ together.
Thus, the inference itself requires only one round of interaction and batch verification.
The protocol is illustrated in \cref{fig:verifiable-ml}.

\begin{figure}
    \centering
    \includegraphics[width=\linewidth]{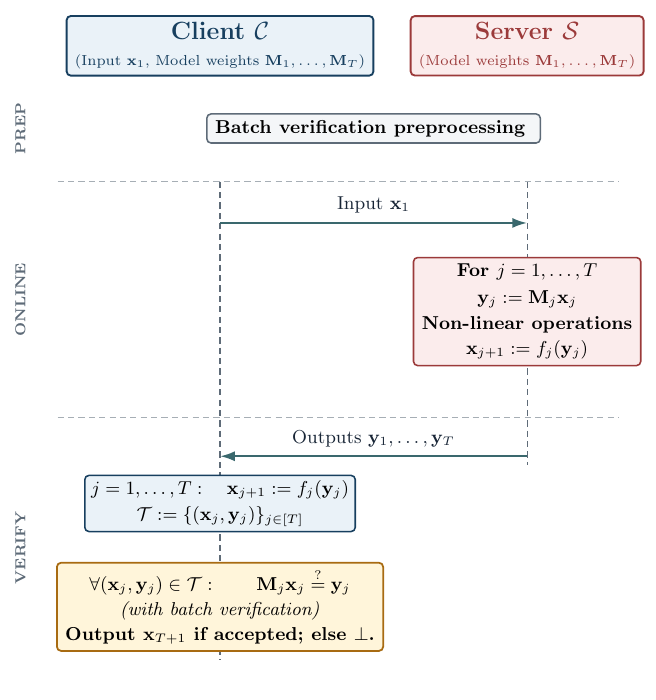}
    \caption{\protname in the verification-only mode.}
    \label{fig:verifiable-ml}
\end{figure}

\parhead{Model quantization}
In \protname, all arithmetic operations are done in the BabyBear field.
To get a fair baseline, we run a quantized \texttt{Qwen3-4B} model over the BabyBear field $\F_q$ as follows.
The model parameters are stored in FP8 format.
Let $W_{\mathsf{real}}$ denote the weight matrix in FP8.
We quantize it element-wise as $W_q\coloneqq\operatorname{round}(2^s W_{\mathsf{real}})$, rounding to the nearest integer.
We embed the resulting signed integers into $\F_q$ and use $s=8$.
After each delegated linear operation, the client demasks the result, interprets it as a centered integer, divides it by $2^{2s}$, and converts it to BF16 before evaluating the next nonlinear operation.
The protocol performs the \ac{mvm} over $\F_q$, and we observe no modular overflow in our experiments.
We omit these representation conversions from \cref{fig:private-ml} for clarity.
This quantization methodology is consistent with how related works execute models in finite fields, such as DeepProve~\cite{deepprove} and Slalom~\cite{slalom}.

\subsubsection{Evaluation of \protname}
\parhead{Experimental setup}
We evaluate all three modes of \protname using the open-source model \texttt{Qwen3-4B}~\cite{qwen3technicalreport}.
The dimensions of the linear operations in each transformer block are shown in \cref{tab:qwen-projection-dimensions}.
Our inference execution takes an eight-token prompt as input and outputs the logits for the next token.
\begin{table}[h]
\centering
\resizebox{0.97\columnwidth}{!}{\begin{tabular}{lcc}
\toprule
\textbf{Linear Operation}
&
\textbf{Input dimension}
&
\textbf{Output dimension}
\\
\midrule
$\mathsf{q\_proj}$    & $2560$ & $4096$ \\
$\mathsf{k\_proj}$    & $2560$ & $1024$ \\
$\mathsf{v\_proj}$    & $2560$ & $1024$ \\
$\mathsf{o\_proj}$    & $4096$ & $2560$ \\
$\mathsf{gate\_proj}$ & $2560$ & $9728$ \\
$\mathsf{up\_proj}$   & $2560$ & $9728$ \\
$\mathsf{down\_proj}$ & $9728$ & $2560$ \\
\bottomrule
\end{tabular}}
\caption{Dimensions of the linear operations in each Qwen transformer block.}
\label{tab:qwen-projection-dimensions}
\end{table}

The client runs on an AWS \texttt{c8i.4xlarge} instance, while the server runs on an AWS \texttt{c8i.32xlarge} instance.
Our baseline is the client running the inference locally.

We organize our evaluation in two stages.
First, we measure the end-to-end throughput, excluding network latency, to quantify the client speedup relative to local inference.
Second, we evaluate client throughput under simulated network latency, as in practice the client is the bottleneck.
For this experiment, we assume a powerful enough server, allowing us to isolate the maximum throughput sustained by the client.

\parhead{End-to-end throughput}
We measure the speedup of \protname relative to running the inference entirely locally on the client.
\Cref{fig:speedup-modes} compares standard, boost, and verification-only modes across different client and server thread counts.
Communication time is excluded.

All three modes provide substantial speedups.
The benefit is largest for resource-constrained clients and decreases as more client threads become available, since local inference also benefits substantially from additional client-side parallelism and has better parallelism than our client protocol with more threads.
Standard mode achieves between $8.08\times$ and $17.62\times$ speedup, while boost mode reaches up to $45.85\times$ by moving privacy mask generation time outside the online time.
Verification-only mode behaves similarly to boost mode for clients with few threads, reaching $44.38\times$.

\begin{figure*}[t]
    \centering
    \begin{subfigure}[t]{0.32\textwidth}
        \centering
        \includegraphics[width=\linewidth]{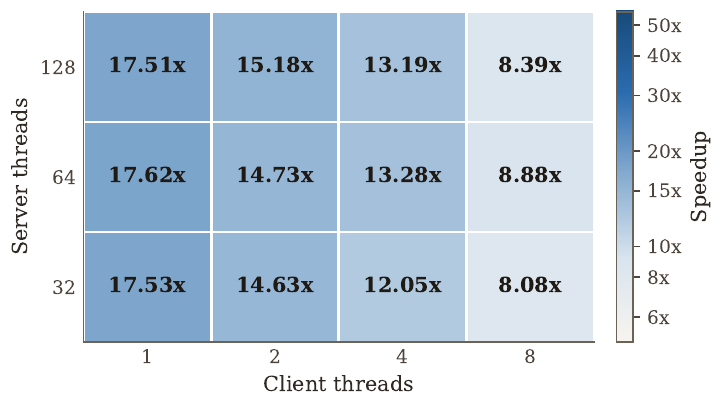}
        \caption{Standard mode.}
        \label{fig:speedup-standard}
    \end{subfigure}
    \hfill
    \begin{subfigure}[t]{0.32\textwidth}
        \centering
        \includegraphics[width=\linewidth]{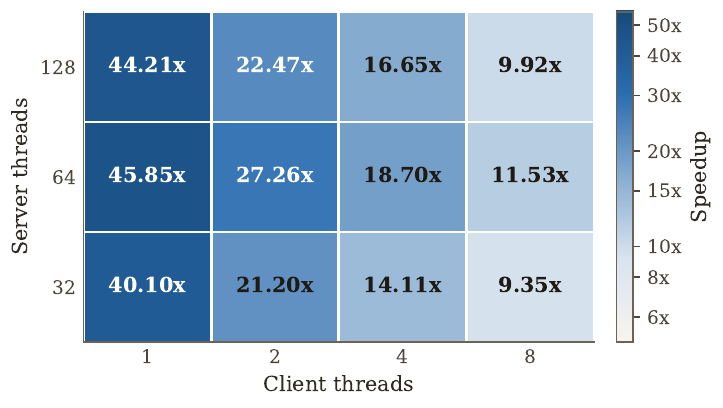}
        \caption{Boost mode.}
        \label{fig:speedup-boost}
    \end{subfigure}
    \hfill
    \begin{subfigure}[t]{0.32\textwidth}
        \centering
        \includegraphics[width=\linewidth]{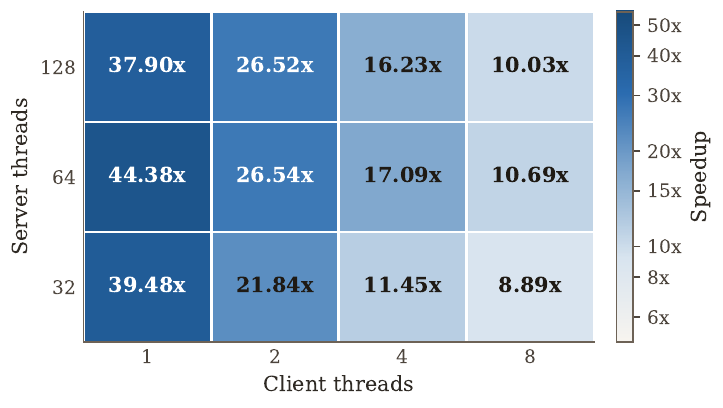}
        \caption{Verification-only mode.}
        \label{fig:speedup-verification-only}
    \end{subfigure}

    \caption{Client speedup of \protname relative to local inference across different client and server thread numbers.}
    \label{fig:speedup-modes}
\end{figure*}

Increasing the server thread count from $32$ to $64$ generally improves the speedup, while increasing it further to $128$ provides no consistent benefit.
This suggests that the current implementation reaches its parallelization limit near $64$ server threads.
We therefore fix $64$ server threads and $2$ client threads, and report the end-to-end throughput and time breakdown in \cref{tab:llm-end-to-end-results}.
Our end-to-end time measures the time from when the client processes the input tokens to when the client accepts the verification and outputs the next-token logits.
We note that the client and server computation may overlap.

\begin{table}[h]
\centering
\resizebox{\columnwidth}{!}{
\begin{tabular}{lccc}
\toprule
\textbf{Metric}
& \textbf{Standard}
& \textbf{Boost}
& \textbf{Verification-only} \\
\midrule
Throughput (tokens/s)
& $8.35$
& $15.46$
& $15.05$ \\
End-to-end time
& $958.3$
& $517.6$
& $531.6$ \\
Client time
& $856.6$
& $227.4$
& $206.8$ \\
Client verification
& $45.5$
& $46.5$
& $45.1$ \\
Client nonlinear operations
& $50.5$
& $53.7$
& $30.6$ \\
Server computation
& $353.5$
& $341.6$
& $324.8$ \\
Local client inference
& $14110.9$
& $14110.9$
& $14110.9$ \\
\bottomrule
\end{tabular}
}
\caption{Online time breakdown in milliseconds for one eight-token prompt, using $2$ client threads and $64$ server threads.}
\label{tab:llm-end-to-end-results}
\end{table}

In standard mode, \protname completes the inference in \SI{958.3}{\milli\second} and reaches a throughput of $8.35$ tokens/s, giving a $14.7\times$ speedup over the $0.57$ tokens/s local baseline.
The client remains the main bottleneck because privacy masks are generated during inference.
Boost mode achieves a $27.3\times$ speedup over local inference, and verification-only mode has similar end-to-end performance as boost mode.
In both boost and verification-only modes, the server computation becomes the bottleneck rather than the client.
Across all three modes, client verification takes only approximately \SI{45}{\milli\second}.
In \cref{tab:llm-end-to-end-more}, we also provide the end-to-end results of 64 fixed server threads and varying the number of client threads.

\begin{table*}[t]
\centering
\small
\setlength{\tabcolsep}{4pt}
\renewcommand{\arraystretch}{1.10}
\resizebox{0.95\textwidth}{!}{
\begin{tabular}{clrrrrrrr}
\toprule
\shortstack{\textbf{Client}\\\textbf{threads}}
& \textbf{Mode}
& \shortstack{\textbf{Throughput}\\\textbf{(tokens/s)}}
& \shortstack{\textbf{End-to-end}\\\textbf{time (ms)}}
& \shortstack{\textbf{Client}\\\textbf{time (ms)}}
& \shortstack{\textbf{Client}\\\textbf{verification (ms)}}
& \shortstack{\textbf{Client nonlinear}\\\textbf{operations (ms)}}
& \shortstack{\textbf{Server}\\\textbf{computation (ms)}}
& \shortstack{\textbf{Local client}\\\textbf{inference (ms)}} \\
\midrule
\multirow{3}{*}{$1$}
& Standard          & $5.17$  & $1547.7$ & $1460.5$ & $89.6$ & $58.6$ & $363.3$ & $27278.6$ \\
& Boost             & $13.45$ & $594.9$  & $292.4$  & $91.8$ & $63.2$ & $349.7$ & $27278.6$ \\
& Verification-only & $13.01$ & $614.7$  & $260.2$  & $87.1$ & $37.4$ & $354.5$ & $27278.6$ \\
\addlinespace[0.4em]
\multirow{3}{*}{$2$}
& Standard          & $8.35$  & $958.3$ & $856.6$ & $45.5$ & $50.5$ & $353.5$ & $14110.9$ \\
& Boost             & $15.46$ & $517.6$ & $227.4$ & $46.5$ & $53.7$ & $341.6$ & $14110.9$ \\
& Verification-only & $15.05$ & $531.6$ & $206.8$ & $45.1$ & $30.6$ & $324.8$ & $14110.9$ \\
\addlinespace[0.4em]
\multirow{3}{*}{$4$}
& Standard          & $12.64$ & $633.0$ & $533.6$ & $23.6$ & $34.4$ & $345.8$ & $8407.0$ \\
& Boost             & $17.80$ & $449.5$ & $173.5$ & $23.8$ & $42.6$ & $347.2$ & $8407.0$ \\
& Verification-only & $16.27$ & $491.8$ & $173.5$ & $22.5$ & $26.9$ & $318.3$ & $8407.0$ \\
\addlinespace[0.4em]
\multirow{3}{*}{$8$}
& Standard          & $14.55$ & $550.0$ & $433.6$ & $17.7$ & $22.4$ & $345.1$ & $4885.9$ \\
& Boost             & $18.88$ & $423.7$ & $159.2$ & $15.6$ & $34.8$ & $346.4$ & $4885.9$ \\
& Verification-only & $17.51$ & $456.9$ & $142.5$ & $12.8$ & $17.4$ & $314.4$ & $4885.9$ \\
\bottomrule
\end{tabular}
}
\caption{Online computation breakdown for an eight-token prompt with 64 server threads.}
\label{tab:llm-end-to-end-more}
\end{table*}

\parhead{Communication}
Standard and boost modes have identical communication.
In both modes, the client sends \SI{21.83}{\mega\byte} and receives \SI{36.00}{\mega\byte}, for \SI{57.83}{\mega\byte} in total over $182$ rounds of communication.
Verification-only mode sends only \SI{81.92}{\kilo\byte} and receives the \SI{36.00}{\mega\byte} transcript, totaling \SI{36.08}{\mega\byte} in a single round.

\parhead{Comparison with DeepProve}
DeepProve~\cite{deepprove} is a recent proof-based system for verifiable LLM inference (though it does not protect input privacy). 
The authors evaluate $124$M-parameter GPT-2 and $270$M-parameter Gemma~3 models with $12$-bit quantization at input lengths between $64$ and $512$.
On its faster \texttt{AMD Ryzen 9 7950X3D} machine with 16 cores and 32 threads, its faster BaseFold instantiation at $64$ input tokens requires \SI{34.2}{\second} to prove and \SI{1.35}{\second} to verify GPT-2, and \SI{81.6}{\second} to prove and \SI{1.91}{\second} to verify Gemma~3.
Furthermore, DeepProve reports proof sizes between $9$ and $43$ MiB.

By comparison, our verification-only mode evaluates an eight-token prompt using the substantially larger \texttt{Qwen3-4B} model in only \SI{531.6}{\milli\second}, including the model computation, with \SI{45.1}{\milli\second} of client verification with only \SI{36}{\mega\byte} of communication.
Although these measurements are not apples-to-apples as DeepProve's experiments use different hardware and LLM, we note that relative to DeepProve's $64$ input token results, our end-to-end time is about $64\times$ lower than that of GPT-2 and $153\times$ lower than that of Gemma~3, while our verification time is about $30\times$ and $42\times$ lower, respectively.
These gaps are particularly notable since \protname is evaluated on the substantially larger Qwen3-4B model.

\parhead{Client throughput}
The end-to-end results show that the performance bottleneck depends on the execution mode.
Standard mode has the client as the bottleneck, while boost and verification-only modes shift more work to the server so the bottleneck becomes the server.
In a deployed service, however, the provider can increase server capacity using better hardware like GPUs, whereas the computational resources of the client are fixed.
We therefore next measure the maximum throughput that one client can sustain, assuming a powerful server that responds in negligible time.

We use microbenchmarks to evaluate the standard, boost, and verification-only modes of \protname. We measure throughput under simulated network RTTs to account for communication delays in realistic deployments.
Steady-state \emph{throughput} measures the aggregate number of prompt tokens processed per second.
For each mode and client thread count, we use a pilot run to find the smallest number of input tokens whose client computation can cover one network RTT.
\cref{tab:throughput-latency} reports the corresponding number of input tokens and the sustained client throughput.

\begin{table*}[!t]
\centering
\small
\setlength{\tabcolsep}{3.5pt}
\renewcommand{\arraystretch}{1.12}
\begin{tabular*}{0.95\textwidth}{
@{\extracolsep{\fill}}
lrrrrrrr
@{}}
\toprule
&
&
\multicolumn{2}{c}{\textbf{RTT = \SI{50}{\milli\second}}}
&
\multicolumn{2}{c}{\textbf{RTT = \SI{100}{\milli\second}}}
&
&
\\
\cmidrule(lr){3-4}
\cmidrule(lr){5-6}
\textbf{Mode}
&
\shortstack{\textbf{Client}\\\textbf{threads}}
&
\shortstack{\textbf{Input}\\\textbf{tokens}}
&
\shortstack{\textbf{Throughput}\\\textbf{(tokens/s)}}
&
\shortstack{\textbf{Input}\\\textbf{tokens}}
&
\shortstack{\textbf{Throughput}\\\textbf{(tokens/s)}}
&
\shortstack{\textbf{Local}\\\textbf{throughput}\\\textbf{(tokens/s)}}
&
\shortstack{\textbf{Speedup over local}\\\textbf{(50/100 ms)}}
\\
\midrule
\multirow{4}{*}{Standard}
& $1$ & $56$  & $5.78$  & $112$ & $5.81$ & $0.29$ & $19.71\times/19.81\times$ \\
& $2$ & $96$  & $9.82$  & $184$ & $9.84$ & $0.57$ & $17.32\times/17.36\times$ \\
& $4$ & $144$ & $15.64$ & $288$ & $15.66$ & $0.95$ & $16.44\times/16.46\times$ \\
& $8$ & $176$ & $19.20$ & $352$ & $19.22$ & $1.64$ & $11.73\times/11.74\times$ \\
\addlinespace[0.4em]
\multirow{4}{*}{Boost}
& $1$ & $368$ & $39.49$ & $728$     & $39.69$ & $0.29$ & $134.65\times/135.33\times$ \\
& $2$ & $408$ & $43.99$ & $808$     & $44.10$ & $0.57$ & $77.59\times/77.79\times$ \\
& $4$ & $488$ & $53.28$ & $976$     & $53.35$ & $0.95$ & $55.99\times/56.06\times$ \\
& $8$ & $512$ & $55.63$ & $1{,}016$ & $55.68$ & $1.64$ & $33.97\times/34.00\times$ \\
\addlinespace[0.4em]
\multirow{4}{*}{Verification-only}
& $1$ & $368$ & $45.70$ & $728$     & $45.95$ & $0.29$ & $155.82\times/156.68\times$ \\
& $2$ & $408$ & $49.21$ & $808$     & $49.34$ & $0.57$ & $86.80\times/87.03\times$ \\
& $4$ & $488$ & $52.85$ & $976$     & $52.92$ & $0.95$ & $55.54\times/55.61\times$ \\
& $8$ & $512$ & $61.55$ & $1{,}016$ & $61.60$ & $1.64$ & $37.59\times/37.62\times$ \\
\bottomrule
\end{tabular*}
\caption{Client throughput under simulated network latency.}
\label{tab:throughput-latency}
\end{table*}

Doubling the RTT approximately doubles the number of input tokens required to hide network waiting, but changes throughput by less than $0.6\%$.
Standard mode achieves an $11.73\times$--$19.81\times$ throughput improvement over local inference.
Moving mask generation offline increases the boost-mode improvement to $33.97\times$--$135.33\times$, while verification-only mode achieves $37.59\times$--$156.68\times$.
The larger speedups with fewer client threads reflect that the client in our protocol is less parallelizable than local inference.

\begin{table*}[t]
\centering
\small
\setlength{\tabcolsep}{5pt}
\renewcommand{\arraystretch}{1.12}
\begin{tabular*}{0.95\textwidth}{
@{\extracolsep{\fill}}
llccccc
@{}}
\toprule
\textbf{RTT}
&
\textbf{Mode}
&
\shortstack{\textbf{TTFT}\\\textbf{(s)}}
&
\shortstack{\textbf{Communication}\\\textbf{wait (s)}}
&
\shortstack{\textbf{Client}\\\textbf{work (s)}}
&
\shortstack{\textbf{Server}\\\textbf{computation (s)}}
&
\shortstack{\textbf{Verification}\\\textbf{(s)}}
\\
\midrule
\multirow{3}{*}{\SI{50}{\milli\second}}
& Standard
& $9.92$
& $9.13$
& $0.86$
& $0.35$
& $0.05$
\\
& Boost
& $9.65$
& $9.13$
& $0.23$
& $0.34$
& $0.05$
\\
& Verification-only
& $0.66$
& $0.05$
& $0.21$
& $0.40$
& $0.05$
\\
\addlinespace[0.4em]
\multirow{3}{*}{\SI{100}{\milli\second}}
& Standard
& $19.02$
& $18.24$
& $0.86$
& $0.35$
& $0.05$
\\
& Boost
& $18.75$
& $18.23$
& $0.23$
& $0.34$
& $0.05$
\\
& Verification-only
& $0.71$
& $0.10$
& $0.21$
& $0.40$
& $0.05$
\\
\bottomrule
\end{tabular*}
\caption{Time to first token for one eight-token request under simulated network delay, using $2$ client threads and $64$ server threads.}
\label{tab:output-latency}
\end{table*}

\subsection{TTFT of \protname}
We measure TTFT (time to first token) using one eight-token input with $2$ client threads and $64$ server threads. Network latency is varied between \SI{50}{\milli\second} and \SI{100}{\milli\second} RTT, while \protname is run in standard, boost, and verification-only mode.

The timer starts when the input enters the online protocol and ends when the next-token logits are available and the verification is accepted.
The corresponding local execution takes \SI{14.11}{\second}.
We report the detailed TTFT numbers in \cref{tab:output-latency}.

Communication dominates the TTFT in the standard mode and the boost mode because both require multiple rounds of communication.
Standard and boost modes are respectively $1.42\times$ and $1.46\times$ faster than local inference at an RTT of \SI{50}{\milli\second}.
At an RTT of \SI{100}{\milli\second}, however, communication makes both modes slower than local inference.

Verification-only mode requires one round of communication.
Its TTFT is \SI{0.66}{\second} and \SI{0.71}{\second} at RTTs of \SI{50}{\milli\second} and \SI{100}{\milli\second}, respectively.
This is $15.0\times$ and $26.8\times$ faster than standard mode, and $21.4\times$ and $19.9\times$ faster than local execution.
Thus, this mode is less latency-sensitive.

\section{Conclusions and Future Directions}
\label{sec:conclusion}
We develop \protname, a protocol for private and verifiable LLM inference that delegates the model's linear operations using a new matrix--vector multiplication primitive with transparent preprocessing and efficient client verification.
Our implementation on \texttt{Qwen3-4B} demonstrates substantial speedups.
An important direction for future work is reducing the communication rounds for private inference.

\section*{Acknowledgements}
Wenhao Wang is supported in part by a research grant from IC3. 

\newpage

\bibliographystyle{plain}
\bibliography{refs}
\crefalias{section}{appendix}
\crefalias{subsection}{appendix}
\appendix
\section{Learning Parity with Noise}\label{app:lpn}
We begin by defining the dual-LPN assumption over a field $\F$ with respect to a distribution of code generators $\calG$, dimension $n$, block length $N>n$, and noise distribution $\calE$.

\begin{definition}[Dual-LPN]
    The $(\calE,\calG,\F)$-dual-LPN assumption for dimensions $n=n(\lambda)$ and $N=N(\lambda)$ states that
    \[
    (\bG,\bG\be)\approx_c (\bG,\bu)
    \]
    where $\bG\leftarrow \calG,\be\leftarrow \calE$, and $\bu\leftarrow \F^n$. Here, $\calG$ samples code generators $\bG\in\F^{n\times N}$ and $\calE$ samples noise vectors $\be\in\F^N$. Further, we say $\calG$ is LPN-friendly if this assumption is believed to hold under standard noise distributions $\calE$.
\end{definition}
Observe that $\bG\be$ is not a \textit{codeword}, but is instead a \textit{syndrome} of the dual code. As such, the search problem version of dual-LPN is also referred to as \textit{syndrome decoding}~\cite{expand-accumulate-codes}.

\parhead{Equivalence to (primal) LPN} 
The dual-LPN assumption is equivalent to the LPN assumption~\cite{lpn-assumption}, which informally states that 
\[
(\bH,\bH\bs+\be)\approx_c (\bH,\bu)
\]
where $\bH\in\F^{N\times n'}$ is a parity-check matrix for the code dual to $\bG$ and $\bs\leftarrow \F^{n'},\be\leftarrow \calE$, and $\bu\leftarrow \F^N$. Equivalence holds for $n'=N-n$ by solving for $\bG\in\F^{n\times N}$ such that $\bG\bH=0$ via row-reduction. It follows that
\[
\bG(\bH\bs+\be)=(\bG\bH)\bs+\bG\be=\bG\be
\]
If dual-LPN holds over a distribution of generators $\calG$ of a dual code, primal-LPN holds over the corresponding distribution of parity-check matrices $\calH$ for the primal code. See, e.g., ~\cite{efficient-pcgs,EMVP25} for further details.

\parhead{Selecting a noise distribution $\calE$.} For LPN-based constructions, exact noise is typically used~\cite{expand-accumulate-codes}, where $\calE=\calE_t^N$ is the uniform distribution over the set of $t$-sparse vectors in $\F^N$. However, alternative distributions such as Bernoulli noise or regular noise may be considered.

\parhead{Selecting code generators $\calG$.} A natural question arises of which $\calG$ are LPN-friendly. The standard LPN assumption instantiates $\calG$ as the uniform distribution over matrices in $\F^{n\times N}$. More recently, two separate lines of work in code-based encryption~\cite{efficient-encryption-quasicyclic,Aragon2021BIKEBF} and pseudorandom correlation generators~\cite{compressing-vector-ole,efficient-pcgs,expand-accumulate-codes,expand-convolute-codes} have sought to instantiate LPN from families of codes which enable linear or quasilinear encoding time. Such code families include quasi-cyclic, Toeplitz, LDPC~\cite{ldpc-codes}, expand-accumulate (EA)~\cite{expand-accumulate-codes}, expand-convolute (EC)~\cite{expand-convolute-codes}, and repeat-accumulate-accumulate (RAA)~\cite{blaze-fast-snarks-from-RAA-codes}, and more~\cite{ring-lpn-codes,EMVP25}. In order to test the concrete security of various possibly LPN-friendly codes, \cite{expand-accumulate-codes} introduced the following unified framework.

\parhead{Security against linear tests.} Suppose $\calG$ is a distribution over code generators $\bG$ such that the code does not admit an efficient decoding algorithm. For instance, Reed-Solomon codes are highly algebraic and enable fast decoding via the Berlekamp-Massey algorithm~\cite{berlekamp-massey}, which trivially breaks the LPN assumption for these codes. Non-algebraic codes typically do not admit such decoding algorithms.

The key observation of~\cite{expand-accumulate-codes} is that all known attacks on LPN over such codes reduce to finding a vector $\bv\in\F^n$ (or equivalently a linear functional) such that LPN samples have some non-negligible bias with respect to $\bv$. This procedure is called a \textit{linear test} and is done by analyzing $\bv^\top \bb$ for $\bb=\bG\be\in\F^n$ and $\be\leftarrow \calE$. Linear tests encapsulate information set decoding (ISD) attacks~\cite{information-set-decoding}, the BKW algorithm~\cite{bkw}, Gaussian elimination, and all other existing practical distinguishing attacks~\cite{expand-accumulate-codes}.

Then, \cite{expand-accumulate-codes} shows that the codes which are resistant to linear tests are precisely those which have high minimum distance, as this ensures that the inner product $\bv^\top \bb=(\bv^\top \bG)\be$ has $\bv^\top \bG$ of large Hamming weight and sufficiently high probability of colliding with sparse $\be$. We formally state these ideas below.

\begin{definition}[Bias of a distribution] For a nonzero vector $\bv\in\F^n$ and a distribution $\calD$ over $\F^n$, the bias of $\calD$ with respect to $\bv$ is defined as
\[
\bias_\bv(\calD)\coloneqq\max_{c\in\F}{\left|\Pr_{\bx\leftarrow \calD}\left[\bv^\top \bx=c\right]-\frac{1}{|\F|}\right|}.
\]
The bias of $\calD$ is defined as
\[
\bias(\calD)=\max_{\bv\neq 0}\bias_\bv(\calD).
\]
\end{definition}
For a noise distribution $\calE\subset \F^N$ and a code generator $\bG\in\F^{n\times N}$, we let $\calE_{\bG}\subset \F^n$ denote the distribution of $\bG\be$ where $\be\leftarrow \calE$, as in~\cite{abbaszadeh2025single}. Then, \textit{linear testing} corresponds to the search problem where a PPT adversary must output some vector $\bv\in\F^n$ such that $\bias_{\bv}(\calE_\bG)>\negl(\lambda)$. As shown in~\cite[Lemma 2.6]{expand-accumulate-codes}, if the minimum distance of the code generated by $\bG$ is $\ge d$, then
\[
\bias(\calE_\bG)
=
\max_{\substack{\bu\in\F^N\\\bu=\bv^\top \bG \text{ for some }\bv\neq 0}}\bias_\bu(\calE)
\leq 
\max_{\substack{\bu\in\F^N\\\wt(\bu)\ge d}}\bias_\bu (\calE).
\]
For $\calE=\calE_t^N$ the uniform distribution over $t$-sparse vectors in $\F^N$, the bias for any $\bu\in\F^N$ with Hamming weight at least $d$ can be bounded by 
\[
\bias_\bu(\calE_t^N)\leq \left(1-\frac{t}{N}\right)^d < e^{-td/N}=e^{-\delta t} .
\]
where $\delta=d/N$ is the relative distance of the given code~\cite{expand-accumulate-codes}. Since the search problem requires finding a suitable vector $\bu\in\F^N$, we can attain $\lambda-\log_2N$ bits of security as in~\cite{expand-accumulate-codes,abbaszadeh2025single} by setting $
e^{-\delta t}\leq 2^{-\lambda +\log_2N}$, or equivalently,
\[
t\geq \frac{(\ln 2)(\lambda-\log_2N)}{\delta}
\]
where $\delta$ depends on the field $\F$ and dimensions $n$ and $N$. For an asymptotically good code with $\delta=\Theta(1)$, this justifies the parameter selection of $t=O_\lambda(1)$.

The linear-test framework does not account for when the noise distribution is highly structured and an adversary is given sufficiently many samples, which enables algebraic attacks such as Arora-Ge~\cite{arora-ge}. We ensure that noise is generated via the \textit{exact distribution} $\calE_t^N$ to mitigate this risk~\footnote{\cite{expand-accumulate-codes} also remarks that when instantiating the linear-test framework over a ring, one must be wary of projection attacks onto ideals. These do not concern our construction since fields do not contain non-trivial ideals.}.

\section{Repeat-Accumulate-Accumulate Codes}\label{app:RAA-codes}
We instantiate the dual-LPN assumption with $\calG$ being a generator for Repeat-Accumulate-Accumulate (RAA) linear codes. We provide a summary of these codes and previous work supporting the conjecture that RAA codes are LPN-friendly.

For any field $\F$, dimension $n$, block length $N$ divisible by $n$, and constant $r=N/n$, an RAA code with generator $\GRAA\in\F^{n\times N}$ is constructed via the composition
\[
\GRAA=\bR\cdot \bPi_1\cdot \bV_1\cdot \bA\cdot \bPi_2\cdot \bV_2\cdot \bA.
\]
Each of the component matrices is sampled and defined as follows:
\begin{itemize}
    \item Repetition matrix $\bR\in\F^{n\times N}$: repeats each entry of the input vector $r$ times, given by \[
    R_{i,j}=\begin{cases}1& \text{if } (i-1)r< j\le ir\\0&\text{otherwise}\end{cases}.
    \]
    \item Permutation matrix $\bPi\in\F^{N\times N}$: samples and performs a uniformly random permutation $\pi:[N]\to[N]$ to the input vector, given by
    \[
    \Pi_{i,j}=\begin{cases}
        1&\text{if }\pi(j)=i\\
        0&\text{otherwise}
    \end{cases}.
    \]
    \item Randomization matrix $\bV\in\F^{N\times N}:$ samples a uniformly random nonzero vector $\bv\leftarrow(\F^\ast)^N$ along the diagonal of a matrix
    \[
    V_{i,j}=\begin{cases}
        v_i&\text{if }i=j\\
        0&\text{otherwise}
    \end{cases}.
    \]
    \item Accumulation matrix $\bA\in\F^{N\times N}:$ computes the prefix-sum of the input vector by applying the upper triangular matrix
    \[
    A_{i,j}=\begin{cases}
        1&\text{if }i\leq j\\
        0&\text{otherwise}
    \end{cases}.
    \]
\end{itemize}

\parhead{Linear-time encoding.} We observe that the encoding map $\F^n\rightarrow \F^N$ by $\bw\mapsto \GRAA^\top \bw$ can be computed in $O(N)$ field operations. By the transposition principle~\cite{1956-bordewijk}, the syndrome map $\F^N\rightarrow \F^n$ by $\bv\mapsto \GRAA \bv$ can similarly be computed in $O(N)$ field operations. We show that for each component matrix of $\GRAA$, matrix--vector multiplication can be computed in time $O(N)$.
\begin{itemize}
    \item Repetition matrix $\bw\mapsto \bR^\top \bw:$ requires $N$ copies of field elements---no field operations.
    \item Permutation matrix $\bw\mapsto \bPi^\top \bw:$ requires $N$ copies of field elements---no field operations.
    \item Randomization matrix $\bw\mapsto \bV^\top \bw$: requires $N$ field multiplications.
    \item Accumulation matrix $\bw\mapsto \bA^\top \bw:$ requires $N$ field additions to compute the prefix-sum.
\end{itemize}
Composing the above operations yields a circuit which computes encodings $\bw\mapsto \GRAA^\top \bw$ in at most $4N$ field operations. The concrete count for syndromes $\bv\mapsto \GRAA \bv$ sits similarly at $5N$ field operations.

\parhead{LPN friendliness and asymptotic goodness.} RAA codes are non-algebraic due to the accumulation and permutation matrices. As such, it is conjectured that no efficient decoding algorithm exists, and that RAA codes are LPN-friendly~\cite{EMVP25,abbaszadeh2025single}.

To motivate this conjecture, many previous papers have conducted analyses on the minimum distance of RAA codes, from which security against linear tests directly follows. Early analyses focused on asymptotic bounds for the minimum distance of RAA codes and limited their instantiation to binary fields~\cite{4729760,DBLP:journals/corr/abs-0810-3422,10.1109/TIT.2009.2030459}. Later works~\cite{5723043,1273662,5743575,cryptoeprint:2026/1075} showed similarly strong minimum-distance properties and asymptotic goodness for RAA codes over arbitrary prime fields.

Since asymptotic analysis is insufficient for instantiating cryptographic protocols with provable guarantees, further work has provided concrete bounds on the minimum distance of RAA codes over various parameters. Binary fields are analyzed in the Blaze SNARK~\cite{blaze-fast-snarks-from-RAA-codes} and proven to achieve constant relative distance, while \cite{abbaszadeh2025single} analyzes prime fields for $n\in[2^4,2^{10}]$ using combinatorial techniques and extrapolates to larger $n$. More recently,~\cite{distanceRAAcodes} gives a rigorous prime-field analysis proving a constant relative distance which is larger than in the binary case.

\parhead{Failure probabilities.} Crucially, when considering code generators $\GRAA$ sampled from a distribution, guarantees of good minimum distance are probabilistic rather than exact. That is, for a specified dimension $n$, block length $N$, rate $r=N/n$, field $\F$, and relative distance $\delta$, the code has a \textit{failure probability} that a randomly sampled generator matrix $\GRAA$ from this distribution will not have relative distance at least $\delta$. We denote this failure probability by $\pfail = \pfail(\calG, N, \delta)$. The focus of prior work is to demonstrate that the failure probability can be made arbitrarily small under practical parameters. 

In particular, \cite{distanceRAAcodes} and \cite{cryptoeprint:2026/1075} jointly show that for any fixed $\F_q$ with $q>2^{31}$, and for $\delta=0.5$, the failure probability is $\widetilde{O}(N^{2-r})$\footnote{\cite{cryptoeprint:2026/1075} has later refined this to $\widetilde{O}(N^{1-r})$ under certain conditions.}. Thus, setting $r=N/n\geq 3$
ensures that the failure probability decreases polynomially with $N$. Furthermore, a framework introduced in Blaze~\cite{blaze-fast-snarks-from-RAA-codes} enables publicly \textit{testing} the distance of an RAA code in polynomial time $O(N^w)$ for any integer $w\in\mathbb{N}$ in order to reduce the failure probability to $\widetilde{O}(N^{(w+1)(2-r)})$~\cite{distanceRAAcodes}.

\section{Private Matrix Extension}\label{app:private-matrix}
$\PipvMVMD$ can be extended to support a client who holds a \textit{private} matrix $\bM\in\F^{m\times n}$ in addition to a private vector $\bx\in\F^n$, and wishes to delegate $\bM\bx$ without revealing either operand to the compute server. In private \ac{llm} inference, this corresponds to outsourcing linear-layer computation for a proprietary model owned by the client.

We next formalize its syntax and security.

\begin{definition}[Private-Matrix pvMVMD]\label{def:private-matrix-pvMVMD}
A private-matrix \ac{pvMVMD} protocol $\Pi$ is a tuple of PPT algorithms $\textsc{(Preprocess,Encrypt,Verify,Decrypt)}$ defined by:
\begin{itemize}[leftmargin=*]
    \item $\textsc{Preprocess}(1^\lambda,1^\kappa,\bM)\rightarrow (\pp,\widehat{\bM},\st_\bM)$. On input security parameter $\lambda$, statistical parameter $\kappa$, and matrix $\bM\in\F^{m\times n}$, outputs public parameters $\pp$, masked matrix $\widehat{\bM}$, and private matrix state $\st_\bM$.
    \item $\textsc{Encrypt}(\pp,\bx)\rightarrow (\widehat{\bx},\st_\bx)$. On input public parameters $\pp$ and vector $\bx\in\F^n$, outputs ciphertext $\widehat{\bx}$ and private vector state $\st_\bx$.
    \item $\textsc{Verify}(\pp, \widehat{\bx},\widehat{\by})\rightarrow \{0,1\}$. On input public parameters $\pp$, ciphertext $\widehat{\bx}\in\F^n$, and server output $\widehat{\by}\in\F^m$, outputs $1$ (accept) or $0$ (reject).
    \item $\textsc{Decrypt}(\pp,\st_\bM,\st_\bx,\widehat{\by})\rightarrow \by$. On input public parameters $\pp$, private states $\st_\bM,\st_\bx$, and ciphertext $\widehat{\by}\in\F^m$, outputs plaintext vector $\by \in \F^m$.
\end{itemize}
The protocol satisfies:
\begin{itemize}[leftmargin=*]
    \item \textbf{Completeness}: For any $\bM\in\F^{m\times n}$ and $\bx \in \F^{n}$,
\[
\resizebox{0.93\columnwidth}{!}{$
\Pr\left[\by = \bM \bx \;\land\;  b=1
\;\middle|\;\begin{aligned}
    &(\pp,\widehat{\bM},\st_\bM)\leftarrow\textsc{Preprocess}(1^\lambda,1^\kappa,\bM)\\
    &(\widehat{\bx},\st_\bx)\leftarrow\textsc{Encrypt}(\pp,\bx)\\
    &\widehat{\by}=\widehat{\bM}\widehat{\bx} \\
    & b \leftarrow  \textsc{Verify}(\pp,\widehat{\bx},\widehat{\by}) \\
    &\by\leftarrow \textsc{Decrypt}(\pp,\st_\bM,\st_\bx,\widehat{\by}) 
\end{aligned}\right]
=1.
$}
\]

\item \textbf{Soundness}: There exists a negligible function $\negl(\cdot)$ such that for any $\bM\in\F^{m\times n}$ and $\bx \in \F^n$,
\[
\resizebox{0.93\columnwidth}{!}{$
\Pr\left[
\by\ne \bM\bx\;\land\; b=1
\;\middle|\;\begin{aligned}
    &(\pp,\widehat{\bM},\st_\bM)\leftarrow\textsc{Preprocess}(1^\lambda,1^\kappa,\bM)\\
    &(\widehat{\bx},\st_\bx)\leftarrow\textsc{Encrypt}(\pp,\bx)\\
    & \widehat{\by}\leftarrow\adv(\pp,\widehat{\bM},\widehat{\bx}) \\
    & b \leftarrow \textsc{Verify}(\pp,\widehat{\bx},\widehat{\by})\\
    &\by\leftarrow \textsc{Decrypt}(\pp,\st_\bM,\st_\bx,\widehat{\by})
\end{aligned}\right]
<\negl(\kappa).
$}
\]

\item \textbf{Client privacy}: There exists a PPT simulator $\mathsf{Sim}$ such that for any $\bM\in\F^{m\times n}$ and $\bx\in\F^n$,
\[
\mathsf{Sim}(1^\lambda, 1^\kappa, 1^m, 1^n) \approx_c \mathsf{View}_\textnormal{Server}^{\Pi}(\bM, \bx),
\]
where $\mathsf{View}_\textnormal{Server}^{\Pi}(\bM, \bx)\coloneqq(\pp, \widehat{\bM}, \widehat{\bx})$ is generated by $(\pp,\widehat{\bM},\st_\bM)\leftarrow\textsc{Preprocess}(1^\lambda,1^\kappa,\bM)$ and $(\widehat{\bx},\st_\bx)\leftarrow\textsc{Encrypt}(\pp,\bx)$.
\end{itemize}
\end{definition}

Our construction leverages the notion of a \textit{trapdoored matrix}~\cite{vaikuntanathan2026improving,braverman2025practical}, which is a pseudorandom matrix sampled alongside a trapdoor that enables fast \acp{mvm} by that matrix.

\begin{definition}[Trapdoored matrix~\cite{vaikuntanathan2026improving,braverman2025practical}]
A trapdoored matrix scheme $\textsc{TDM}$ is a pair of PPT algorithms $(\textsc{TDM.Gen},\textsc{TDM.Eval})$ with the following syntax:
\begin{itemize}[leftmargin=*]
    \item $\textsc{TDM.Gen}(1^\lambda,1^m,1^n) \to (\bM', \td_{\bM'})$ outputs a matrix $\bM'\in\F^{m\times n}$ and a trapdoor $\td_{\bM'}$.
    \item $\textsc{TDM.Eval}(\td_{\bM'},\bx)$ takes trapdoor $\td_{\bM'}$ and vector $\bx\in\F^n$, and outputs a vector in $\F^m$.
\end{itemize}
$\textsc{TDM}$ satisfies the following properties:
\begin{itemize}[leftmargin=*]
    \item \textbf{Correctness}: For every $(\bM',\td_{\bM'})$ output by $\textsc{TDM.Gen}(1^\lambda,1^m,1^n)$ and every $\bx\in\F^n$,
    \[
        \textsc{TDM.Eval}(\td_{\bM'},\bx) = \bM'\bx.
    \]
    \item \textbf{Pseudorandomness}: The distribution of $\bM'$ is computationally indistinguishable from the uniform distribution over $\F^{m\times n}$ to any PPT adversary that does not know $\td_{\bM'}$.
\end{itemize}
\end{definition}

Braverman and Newman~\cite{braverman2025practical}
and Vaikuntanathan and Zamir~\cite{vaikuntanathan2026improving} both construct TDMs using recursion, where $\textsc{TDM.Eval}$ runs in quasilinear time $T_{\textsc{TDM}}(m,n)=\widetilde{O}(m+n)$. Using dual-LPN-friendly linear-time codes, EMVP~\cite{EMVP25} improves the runtime to linear $T_{\textsc{TDM}}(m,n) = O(m+n)$.

In our construction, the client masks $\bM$ by sampling $(\bM',\td_{\bM'})\leftarrow\textsc{TDM.Gen}$ and uploading $\widehat{\bM}\coloneqq\bM+\bM'$ to the server, storing $\td_{\bM'}$ locally. The preprocessing terms $\bP\coloneqq\widehat{\bM}\privG$ and $\bQ\coloneqq\verG^\top\widehat{\bM}$ are generated as part of $\pp$. For each query, the client masks $\bx$ as $\widehat{\bx}\coloneqq\bx+\privG\prive$, and the server computes $\widehat{\by}\coloneqq\widehat{\bM}\widehat{\bx}=(\bM+\bM')(\bx+\bx')$. The client checks the claim using $\textsc{Verify}$ and decrypts $\by\coloneqq\widehat{\by}-\bP\prive-\textsc{TDM.Eval}(\td_{\bM'},\bx)$. \Cref{prot:private-matrix-pvMVMD} presents the full protocol.

\begin{protocol}[!t]
\begin{mdframed}
\small
\setlist[itemize]{nosep}
\begin{center}
\colorbox{gray!20}{$\PipvMVMD^{\mathsf{privM}}$}
\end{center}
\underline{\textbf{Input parameters}}:
\vspace{0.1em}
\begin{itemize}[leftmargin=*]
    \item Trapdoored matrix scheme $\textsc{TDM} = (\textsc{TDM.Gen}, \textsc{TDM.Eval})$ for matrices in $\F^{m\times n}$.
    \item Code distributions $\privcalG$ and $\vercalG$, sparsity parameters $\privt, \verit$, repetition parameter $\verl$, and block lengths $\privN, \verN$ as in \cref{prot:pvMVMD}.
    \item Field oracle $\F$.
\end{itemize}

\vspace{0.5em}
\underline{\textbf{\textsc{Preprocess}}$(1^\lambda,1^\kappa,\bM)\to(\pp,\widehat{\bM},\st_{\bM})$}:
\vspace{0.1em}
\begin{itemize}[leftmargin=*]
    \item Sample $(\bM',\td_{\bM'}) \leftarrow \textsc{TDM.Gen}(1^\lambda,1^m,1^n)$
    \item Compute $\widehat{\bM} \coloneqq \bM+\bM' \in\F^{m\times n}$.
    \item Sample $\privG\leftarrow\privcalG$ and $\verG\leftarrow\vercalG$.
    \item Compute $\bP \coloneqq \widehat{\bM}\privG \in\F^{m\times\privN}$.
    \item Compute $\bQ \coloneqq \verG^\top\widehat{\bM} \in\F^{\verN\times n}$.
    \item Output $\pp \coloneqq (\privG, \verG, \bP, \bQ, \privt, \verit, \verl)$, $\widehat{\bM}$, and $\st_{\bM} \coloneqq \td_{\bM'}$.
\end{itemize}

\vspace{0.5em}
\underline{\textbf{\textsc{Encrypt}}$(\pp,\bx)\to(\widehat{\bx},\st_{\bx})$}:
\vspace{0.1em}
\begin{itemize}[leftmargin=*]
    \item Sample $\prive\leftarrow\calE_{\privt}^{\privN}$ and compute $\bx' \coloneqq \privG\prive \in\F^n$.
    \item Output $\widehat{\bx} \coloneqq \bx+\bx'$ and $\st_{\bx} \coloneqq (\prive,\bx)$.
\end{itemize}

\vspace{0.5em}
\underline{\textbf{\textsc{Verify}}$(\pp,\widehat{\bx},\widehat{\by})\to\{0,1\}$}:
\vspace{0.1em}
\begin{itemize}[leftmargin=*]
    \item Compute $\verG^\top \widehat{\by}\in\F^{\verN}$.
    \item Repeat $\verl$ times:
    \begin{itemize}[leftmargin=*]
        \item Sample $\vere\leftarrow\calE_{\verit}^{\verN}$.
        \item Compute $v_L \coloneqq (\vere^\top\bQ)\widehat{\bx}$ and $v_R \coloneqq \vere^\top(\verG^\top\widehat{\by})$.
        \item If $v_L\neq v_R$, output 0 (reject) and halt.
    \end{itemize}
    \item If all $\verl$ checks pass, output 1 (accept).
\end{itemize}

\vspace{0.5em}
\underline{\textbf{\textsc{Decrypt}}$(\pp,\st_{\bM},\st_{\bx},\widehat{\by})\to\by$}:
\vspace{0.1em}
\begin{itemize}[leftmargin=*]
    \item Parse $\st_{\bM} = \td_{\bM'}$ and $\st_{\bx} = (\prive,\bx)$.
    \item Compute $\by_1 \coloneqq \bP\prive$ and $\by_2 \coloneqq \textsc{TDM.Eval}(\td_{\bM'},\bx)$.
    \item Output $\by \coloneqq \widehat{\by} - \by_1 - \by_2 \in\F^m$.
\end{itemize}
\end{mdframed}
\caption{Protocol $\PipvMVMD^{\mathsf{privM}}$ for private-matrix and private-vector delegation.}
\label{prot:private-matrix-pvMVMD}
\end{protocol}

\begin{theorem}
\label{thm:private-matrix-pvMVMD-secure}
Suppose that $\textsc{TDM}$ satisfies correctness and pseudorandomness, $\privcalG$ is a distribution over codes with relative distance at least $\privdelta$ satisfying the $(\calE_{\privt}^{\privN},\privcalG,\F)$-dual-LPN assumption for polynomially many independent samples, and the code generated by $\verG$ has relative distance $\verdelta$.
Then $\PipvMVMD^{\mathsf{privM}}$ satisfies completeness, soundness, and client privacy as defined in \cref{def:private-matrix-pvMVMD} when instantiated with $|\F|\geq 4$, $\verl = \left\lceil \frac{\kappa}{\log_2(|\F|-1)-1} \right\rceil$, and $\verit = \left\lceil \frac{\max\left(\log_2(|\F|-1),\kappa+1\right)}{-\log_2(1-\verdelta)} \right\rceil$.
\end{theorem}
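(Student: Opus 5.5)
The plan follows the structure of the proof of \cref{thm:pvMVMD-secure}, with $\bM$ replaced by $\widehat{\bM}=\bM+\bM'$ throughout. The one new step is a hybrid that replaces the trapdoored matrix $\bM'$ by a uniform matrix.

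\textbf{Completeness.} Suppose the server is honest, so $\widehat{\by}=\widehat{\bM}\widehat{\bx}$. Then $\textsc{Verify}$ is exactly the $\textsc{Verify}$ of $\PivMVMD$ run with public matrix $\widehat{\bM}$ and $\bQ=\verG^\top\widehat{\bM}$, so it accepts with probability $1$ by \cref{thm:vMVMD-secure}. For decryption, use $\bP=\widehat{\bM}\privG$ and TDM correctness, which gives $\textsc{TDM.Eval}(\td_{\bM'},\bx)=\bM'\bx$. Hence
\[
\widehat{\by}-\bP\prive-\bM'\bx=(\bM+\bM')(\bx+\privG\prive)-(\bM+\bM')\privG\prive-\bM'\bx=\bM\bx .
\]

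\textbf{Soundness.} Fix $\bM$ and $\bx$. Condition on any output of $\textsc{Preprocess}$ for which $\verG$ has relative distance $\verdelta$, and on any $\prive$. By TDM correctness the decrypted output satisfies $\by-\bM\bx=\widehat{\by}-\widehat{\bM}\widehat{\bx}$, by the same algebra as above. So an incorrect output $\by\neq\bM\bx$ forces $\widehat{\by}\neq\widehat{\bM}\widehat{\bx}$. This is an incorrect claim for the fixed public matrix $\widehat{\bM}$, and $\bQ$ is computed honestly from it. The adversary's view includes $\widehat{\bM}$, but the bound of \cref{lemma:vMVMD-soundness} and \cref{thm:vMVMD-secure} holds for every fixed matrix and every adversarial response. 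With the stated $\verit$ and $\verl$, acceptance therefore has probability at most $2^{-\kappa}$.

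\textbf{Client privacy.} The simulator takes $(1^\lambda,1^\kappa,1^m,1^n)$. It samples a uniform $\widetilde{\bM}\leftarrow\F^{m\times n}$, samples $\privG\leftarrow\privcalG$ and $\verG\leftarrow\vercalG$, computes $\bP=\widetilde{\bM}\privG$ and $\bQ=\verG^\top\widetilde{\bM}$, samples a uniform $\widetilde{\bx}\leftarrow\F^n$, and outputs $(\pp,\widetilde{\bM},\widetilde{\bx})$. I would argue through two hybrids.

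In $H_1$, replace $\bM'$ by a uniform $\bU$. Any distinguisher between the real view and $H_1$ gives a distinguisher against TDM pseudorandomness. The reduction receives $\bM'$ (real or uniform) and does everything else itself: it forms $\bM+\bM'$, samples the codes, computes $\bP$ and $\bQ$, and samples $\prive$ to form $\widehat{\bx}$. None of these steps needs $\td_{\bM'}$. In $H_1$ the matrix $\bM+\bU$ is exactly uniform, so $H_1$ matches the simulator's matrix part.

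In $H_2$, replace $\privG\prive$ by a uniform $\bu$. This step reduces to dual-LPN as in \cref{thm:pvMVMD-secure}: the reduction receives $(\privG,\bb)$ from the challenger, builds the rest of $\pp$ itself, and sets $\widehat{\bx}=\bx+\bb$. Then $\bx+\bu$ is uniform, and $H_2$ equals the simulator's output.

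\textbf{Main obstacle.} The delicate step is the first hybrid. The reduction must be able to build all of $\pp$ from $\bM'$ alone, without the trapdoor, and it can. One should also note that $\bx$ and $\bM$ are fixed, non-uniform inputs, so the reductions are non-uniform, which is standard for this kind of argument.
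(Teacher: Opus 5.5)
Your proposal is correct and takes essentially the same approach as the paper. It derives the identity $\by-\bM\bx=\widehat{\by}-\widehat{\bM}\widehat{\bx}$ from TDM correctness, gets completeness and soundness from \cref{thm:vMVMD-secure} applied to the public matrix $\widehat{\bM}$, and proves privacy with the same two hybrids (TDM pseudorandomness, then dual-LPN). You also spell out details the paper leaves implicit: that the first reduction never needs $\td_{\bM'}$, and that the reductions are non-uniform in $(\bM,\bx)$.
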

\begin{proof}
For any vector $\widehat{\by}$, the decrypted output is $\by \coloneqq \widehat{\by} - \bP\prive - \textsc{TDM.Eval}(\td_{\bM'},\bx)$. Using $\bP = \widehat{\bM}\privG$, $\textsc{TDM.Eval}(\td_{\bM'},\bx) = \bM'\bx$, and $\widehat{\bM}=\bM+\bM'$, we derive
\begin{align*}
    \by - \bM\bx
    &= \widehat{\by} - \widehat{\bM}\privG\prive - \bM'\bx - \bM\bx \\
    &= \widehat{\by} - \widehat{\bM}\privG\prive - (\bM+\bM')\bx \\
    &= \widehat{\by} - \widehat{\bM}(\bx + \privG\prive) \\
    &= \widehat{\by} - \widehat{\bM}\widehat{\bx}.
\end{align*}

\noindent\textbf{Completeness.} If the server honestly evaluates $\widehat{\by} = \widehat{\bM}\widehat{\bx}$, then the identity above yields $\by=\bM\bx$. Further, \textsc{Verify} accepts with probability 1 by \cref{thm:vMVMD-secure}.

\noindent\textbf{Soundness.} If the server returns an output with $\by \neq \bM\bx$, the identity above implies $\widehat{\by} \neq \widehat{\bM}\widehat{\bx}$. By \cref{thm:vMVMD-secure}, \textsc{Verify} rejects with probability at least $1 - 2^{-\kappa}$ for these parameters.

\noindent\textbf{Client privacy.} We prove client privacy via two hybrid steps. In the first hybrid, we replace the trapdoored matrix $\bM'$ with a uniformly random matrix $\bU_{\bM}\leftarrow\F^{m\times n}$, which is computationally indistinguishable by the pseudorandomness of $\textsc{TDM}$. After this replacement, $\widehat{\bM}=\bM+\bU_{\bM}$ is uniformly distributed over $\F^{m\times n}$ and independent of $\bM$. In the second hybrid, we replace $\privG\prive$ with an independently sampled uniform vector $\bu\leftarrow\F^n$, which is computationally indistinguishable under the $(\calE_{\privt}^{\privN},\privcalG,\F)$-dual-LPN assumption. After this replacement, $\widehat{\bx}=\bx+\bu$ is uniformly distributed over $\F^n$ and independent of $\bx$. The resulting distribution is simulated by a PPT simulator $\mathsf{Sim}(1^\lambda, 1^\kappa, 1^m, 1^n)$ that samples uniform $\widetilde{\bM}\leftarrow\F^{m\times n}$ and $\widetilde{\bx}\leftarrow\F^n$, computes $\widetilde{\pp}$ from $\widetilde{\bM}$, and outputs $(\widetilde{\pp}, \widetilde{\bM}, \widetilde{\bx}) \approx_c \mathsf{View}_\textnormal{Server}^{\Pi}(\bM,\bx)$. Computational indistinguishability follows from these two hybrids.
\end{proof}

\parhead{Efficiency.}
The online client computes $\textsc{TDM.Eval}$ in some time $T_{\textsc{TDM}}(m,n)$ and all other $\textsc{Encrypt}$, $\textsc{Verify}$, and $\textsc{Decrypt}$ steps in time $O_{\lambda,\kappa}(m+n)$ by~\cref{subsec:delegation:efficiency}. Choosing a linear evaluation time $\textsc{TDM}$ (e.g., in~\cite{EMVP25}) provides asymptotically optimal online time $O_{\lambda,\kappa}(m+n)$. The server performs exactly one field \ac{mvm}, and preprocessing requires $O(mn)$ field operations to compute $\bP$ and $\bQ$.

\parhead{Drawback on preprocessing}
In the public-matrix setting \ac{pvMVMD}, preprocessing is \emph{transparent} and can be computed once by the server and published globally for all clients. In contrast, for private-matrix delegation, each client must sample their own independent trapdoor $\td_{\bM'}$ and mask the matrix $\widehat{\bM}$ privately, requiring $O(mn)$ \emph{private} preprocessing time. This construction is therefore most attractive when the preprocessing cost can be amortized across many queries.

\section{Soundness Composition}\label{app:soundness-composition}

We show that any private delegation protocol can be made sound by composing with a sound verifiable delegation protocol. This is the same technique we use in constructing $\PipvMVMD$ from $\PivMVMD$ in \cref{prot:pvMVMD}, and this demonstrates that a lightweight verification layer can be applied to a private delegation protocol.

Let $\PivMVMD$ be a complete and sound \ac{vMVMD} protocol, and let $\PipvMVMD$ be a complete and private but possibly unsound private delegation protocol. We informally define the protocol $\PipvMVMD'$ as follows:
\begin{itemize}[leftmargin=*]
    \item \textsc{Preprocess}: run $\PivMVMD.\textsc{Preprocess}$ and $\PipvMVMD.\textsc{Preprocess}.$
    \item \textsc{Encrypt}: run $\PipvMVMD.\textsc{Encrypt}.$
    \item \textsc{Verify}: run $\PivMVMD.\textsc{Verify}.$
    \item \textsc{Decrypt}: run $\PipvMVMD.\textsc{Decrypt}.$
\end{itemize}

Completeness and client privacy of $\PipvMVMD'$ follow from the corresponding properties of $\PivMVMD$ and $\PipvMVMD$.

For soundness, suppose the adversary returns $\widehat{\by}$ such that $\by\neq \bM\bx$, where $\by=\PipvMVMD'.\textsc{Decrypt}(\pp,\st,\widehat{\by})$. By the completeness of $\PipvMVMD$, if $\widehat{\by}=\bM\widehat{\bx}$ then $\by = \bM\bx$; taking the contrapositive, if $\by\neq \bM\bx$, it must hold that $\widehat{\by}\neq \bM\widehat{\bx}$. The $\textsc{Verify}$ algorithm of $\PivMVMD$ is applied to the claim
\[
\widehat{\by}\stackrel{?}{=}\bM\widehat{\bx}.
\]
By soundness of $\PivMVMD$, $\textsc{Verify}$ detects this inconsistency except with negligible probability.

\section{Attack on Slalom at the Carnival}\label{app:satc}
Slalom at the Carnival (S@C) by Bruhns et al.~\cite{satc} proposes an addition to Slalom~\cite{slalom} which reduces the client online running time required to encrypt and decrypt vectors in a \ac{pvMVMD} protocol. They claim to achieve client privacy; however, we show a flaw in their approach which breaks client privacy and enables a semi-honest server to uncover projections of the client's input vectors.

\parhead{S@C protocol} Fix a matrix $\bM\in\F^{m\times n}$ and a parameter $\ell \ll n$ which is dependent on the security parameter. In the S@C preprocessing, the server samples $\ell$ vectors $\br_1,\dots,\br_\ell\leftarrow \F^n$ and computes the corresponding projections $\bu_1,\dots,\bu_\ell\in\F^m$, where $\bu_i=\bM\br_i$. These projections are sent to the client. When the client wishes to delegate an \ac{mvm} for a vector $\bx$, they first sample $\bk\leftarrow \{0,1\}^\ell$, compute the masking term
\[
\bx':=\sum_{i=1}^\ell k_i\br_i\in\F^n,
\]
and encrypt $\widehat{\bx}:=\bx+\bx'$. This is sent to the server, which returns $\widehat{\by}=\bM\widehat{\bx}$. The client decrypts by subtracting the projections:
\[
\by:=\widehat{\by}-\sum_{i=1}^\ell k_i\bu_i.
\]
\parhead{The attack}
The S@C protocol is efficient only when $\ell < n$. However, in this parameter range, the server can use Gaussian elimination to solve for a projection matrix $\bPi\in\F^{(n-\ell)\times n}$ annihilating the vectors $\br_1,\dots,\br_\ell$ (i.e., $\bPi\br_i=0$ for each $i$).

For a client input $\widehat{\bx}$, the server applies $\bPi$ and recovers
\[
\bPi\widehat{\bx}=\bPi\left(\bx+\sum_{i=1}^\ell k_i\br_i\right)=\bPi\bx\in\F^{n-\ell}.
\]
This enables the server to distinguish $\widehat{\bx}$ from truly random and precludes the existence of a privacy simulator that works for all $\bx$.

\end{document}